\definecolor{Bleu}{RGB}{0,0,204}
\definecolor{Violet}{RGB}{102,0,204}
\definecolor{Rouge}{RGB}{204,0,0}
\definecolor{Highlight}{RGB}{251,0,0}
\renewcommand\thmcontinues[1]{Continued}
\newcommand{\mylabel}[2]{
	\addtocounter{\@listctr}{-1}%
    \refstepcounter{\@listctr}%
	#2\def\@currentlabel{#2}\label{#1}}
\definecolor{darkblue}{rgb}{0,0.4,0.9}
\definecolor{gray10}{rgb}{0.1,0.1,0.1}
\definecolor{gray20}{rgb}{0.2,0.2,0.2}
\definecolor{gray30}{rgb}{0.3,0.3,0.3}
\definecolor{gray40}{rgb}{0.4,0.4,0.4}
\definecolor{gray60}{rgb}{0.6,0.6,0.6}
\definecolor{gray80}{rgb}{0.8,0.8,0.8}
\definecolor{gray90}{rgb}{0.9,0.9,.9}
\definecolor{gray95}{rgb}{0.95,0.95,.95}
\definecolor{gray96}{rgb}{0.96,0.96,.96}
\definecolor{lgreen} {RGB}{180,210,100}
\definecolor{dblue}  {RGB}{20,66,129}
\definecolor{ddblue} {RGB}{11,36,69}
\definecolor{lred}   {RGB}{220,0,0}
\definecolor{nred}   {RGB}{224,0,0}
\definecolor{norange}{RGB}{230,120,20}
\definecolor{nyellow}{RGB}{255,221,0}
\definecolor{ngreen} {RGB}{98,158,31}
\definecolor{dgreen} {RGB}{78,138,21}
\definecolor{nblue}  {RGB}{28,130,185}
\definecolor{jblue}  {RGB}{20,50,100}
\definecolor{nnyellow}{RGB}{235,200,0}
\definecolor{purple}{RGB}{150, 0, 120}
\definecolor{sgGreen} {RGB}{20, 180, 50}
\definecolor{revised}{rgb}{0,0,0.9}
\newtheorem{theorem}{Theorem}
\newtheorem{lemma}[theorem]{Lemma}
\theoremstyle{definition}
\newcommand{\openr}{\hbox{${\rm I\kern-.2em R}$}}
\newcommand{\openn}{\hbox{${\rm I\kern-.2em N}$}}
\newcommand{\logit}{\operatorname{logit}}
\newcommand{\expit}{\operatorname{expit}}
\newcommand{\Rem}{\operatorname{Rem}}
\newcommand\independent{\protect\mathpalette{\protect\independenT}{\perp}}
\def\independenT#1#2{\mathrel{\rlap{$#1#2$}\mkern2mu{#1#2}}}
\newcommand{\norm}[1]{\left\lVert#1\right\rVert}
\newcommand{\Id}{\textnormal{Id}}
\newcommand{\mb}[1]{\mathbf{#1}}
\DeclareMathOperator{\Var}{Var}
\DeclareMathOperator{\E}{\mathbb{E}}
\DeclareMathOperator{\Ind}{\mathds{1}}
\DeclareMathOperator{\RR}{RR}
\newcommand{\indicator}[1]{\mathds{1}\{ #1 \}}
\newlength{\trianglerightwidth}
\algnewcommand{\LineCommentCont}[1]{\Statex \hskip\ALG@thistlm%
  \parbox[t]{\dimexpr\linewidth-\ALG@thistlm}{\hangindent=\trianglerightwidth \hangafter=1 \strut$\triangleright$ #1\strut}}
\title{Efficient Principally Stratified Treatment Effect Estimation in Crossover Studies with Absorbent Binary Endpoints}
\author[1,2]{Alex  Luedtke}
\affil[1]{\footnotesize   Department of Statistics, University of Washington, USA}
\affil[2]{\footnotesize   Vaccine  and   Infectious  Disease   Division,  Fred
  Hutchinson Cancer Research Center, USA} 
\author[3]{Jiacheng Wu}
\affil[3]{\footnotesize   Department of Biostatistics, University of Washington, USA} 
\date{}
\begin{document}
\maketitle
\begin{abstract}\singlespacing
Suppose one wishes to estimate the effect of a binary treatment on a binary endpoint conditional on a (possibly continuous) post-randomization quantity in a counterfactual world where all subjects received treatment. It is generally difficult to identify this parameter without strong, untestable assumptions. It has been shown that identifiability assumptions become much weaker under a crossover design where (some subset of) subjects not receiving treatment are later given treatment. Under the assumption that the post-treatment biomarker observed in these crossover subjects is the same as would have been observed had they received treatment at the start of the study, one can identify the treatment effect with only mild additional assumptions. This remains true if the endpoint is absorbent, i.e. an endpoint such as death or HIV infection such that the post-crossover treatment biomarker is not meaningful if the endpoint has already occurred. In this work, we review identifiability results for a parameter of the distribution of the data observed under a crossover design with the principally stratified treatment effect of interest. We describe situations where these assumptions would be falsifiable given a sufficiently large sample from the observed data distribution, and show that these assumptions are not otherwise falsifiable. We then provide a targeted minimum loss-based estimator for the setting that makes no assumptions on the distribution that generated the data. When the semiparametric efficiency bound is well defined, for which the primary condition is that the biomarker is discrete-valued, this estimator is efficient among all regular and asymptotically linear estimators and is more efficient than existing nonparametric approaches in situations where a continuous baseline covariate is predictive of either the outcome or the biomarker. We also present a version of this estimator for situations where the biomarker is continuous. Implications to closeout designs for vaccine trials are discussed.
\end{abstract} 

\noindent%
{\it Keywords:} absorbent binary endpoints; closeout design; crossover design; principal stratification

\section{Introduction}\label{sec:intro}
Suppose one wishes to assess the effect of a binary treatment on a binary endpoint. For simplicity, we refer to the levels of the treatment as ``treated'' and ``untreated'', but one could alternatively study a contrast between two active treatments. The objective is to develop a biomarker that explains much of the treatment effect variation, i.e. a variable for which the stratified treatment effect is highly variable. Sometimes, the most predictive biomarkers are defined using the counterfactual values that a post-treatment biomarker would take had a subject been treated or untreated. Conditioning on these biomarkers was termed principal stratification by \cite{Frangakis&Rubin2002}. Principally stratified analyses are especially interesting in situations where one wishes to bridge a treatment's efficacy from one population to another, and so treats a small subset of the new population and observes their biomarkers in order to predict what the treatment effect would be in this population \citep{Gilbertetal2011}. Though additional transportability assumptions \citep{Bareinboim&Pearl2012} are needed to be able to bridge a treatment effect, these assumptions become more plausible if a strong effect modifier is available. In this paper, we focus only on counterfactual post-randomization biomarkers that occur under treatment. This ensures that one could observe the counterfactual biomarker under an appropriate intervention -- this is in contrast to stratifying on the biomarker's counterfactual value both after treatment and after a lack of treatment. Furthermore, in vaccine studies, which represent an important application area, the post-treatment biomarkers of most interest are immune responses, and in many settings untreated subjects who have not experienced the endpoint will have no immune response, so that studying treatment effects conditional on unvaccinated immune responses is uninteresting. We assume that the endpoint is absorbent \citep{Nason&Follmann2010}, so that the biomarker has no substantive meaning once the endpoint has occurred. For example, if the treatment is vaccination, the biomarker is vaccine-induced immune response, and the endpoint is HIV infection, then a measured immune response is only of interest if it precedes HIV infection. For most biomarkers of interest, death is also an absorbent endpoint.

Under most study designs, identifying principally stratified treatment effects requires strong, untestable assumptions. In the context of vaccine trials, \cite{Follmann2006} demonstrated the utility of crossover designs \citep[e.g.,][]{Woodsetal1989} for dramatically weakening these needed assumptions, where here the designs were referred to as closeout placebo vaccination designs and featured a crossover of uninfected placebos to the vaccine. There have since been numerous works that study closeout designs \citep{Gilbert&Hudgens2008,Wolfson&Gilbert2010,Gabriel&Gilbert2013,Huangetal2013}. 
Others have demonstrated that crossover designs are also of interest for estimating principally stratified treatment effects for treatments other than vaccines \citep{Wolfson&Henn2014,Gabriel&Follmann2016}.  All of these earlier works either relied on baseline covariates being discrete or on correctly specified (semi)parametric models. In this work, we provide an estimator of principally stratified treatment effects that is efficient within the nonparametric model that at most makes assumptions on the probability of receiving treatment given covariates. The efficiency of this estimator relies on the biomarker being discrete, since otherwise the semiparametric efficiency bound \citep{Pfanzagl1990,Bickel1993} will not be defined. We then generalize this nonparametric estimator to the case that the biomarker is continuous.

From a practical standpoint, closeout placebo vaccinations are easy to perform in a vaccine clinical trial setting because uninfected placebo recipients are still under follow-up at the end of the trial and are thus able to be vaccinated. Nonetheless, a downside to performing a vaccinating uninfected placebo recipients at the end of the study is that they are no longer available for additional follow-up as placebo recipients, making it difficult to assess the long-term efficacy of the vaccine. One option is to only vaccinate a (random) subset of uninfected placebo recipients at the end of a trial so that some placebo recipients are still available for longer-term follow-up.

We note that studying principally stratified treatment effects answers fundamentally different questions than does studying direct effects \citep{RobinsGreenland1992,Pearl2001,Vanderweele2015}, which would aim to study the effect of treatment on the outcome if the post-treatment biomarker were set to some fixed value or to the value it would have taken had the subject been untreated. This is in contrast to principal stratification analyses, which are useful for bridging vaccine efficacy from one population to another, provided one collects the post-treatment biomarker on a small subset of subjects in the new population and certain transportability assumptions hold. Identifying counterfactual direct effect estimands with the observed data distribution requires strong assumptions, including that there are no unmeasured confounders of the effect of the biomarker on the outcome \citep{Cole&Hernan2002}. In crossover studies, this no-unmeasured-confounders assumption does not need to hold to obtain valid inference for the principally stratified parameters that we study in this work. Nonetheless, we emphasize that estimation should be focused on the quantity that best reflects the scientific question. Furthermore, we note that, though principally stratified analyses focus on a different quantity than direct effect analyses, it has been shown that the presence of a nonzero principally stratified effect, conditional equality of the counterfactual biomarker under treatment and the counterfactual biomarker under a lack of treatment, implies the presence of a nonzero direct effect \citep{VanderWeele2008}; the reverse implication does not hold. In this work, we study effects stratified only on the counterfactual biomarker under treatment, so the result of \cite{VanderWeele2008} does not generally apply. Nonetheless, in settings in which the counterfactual biomarker under a lack of treatment is degenerate at some value $s_0$, e.g. in HIV vaccine studies where the biomarker is an immune response and there is no immune response if the vaccine is not administered, the presence of a nonzero treatment effect conditional on the counterfactual biomarker value under treatment equal to $s_0$ indicates the presence of a direct effect, i.e. an effect in a counterfactual world where treatment was administered and the biomarker had been set to $s_0$.

\noindent\textbf{Organization of Manuscript.} Section~\ref{sec:prelim} introduces the notation, parameters of the counterfactual distribution, identifiability of these parameters with parameters of the observed data distribution, and the statistical estimation problem. Section~\ref{sec:estDisc} describes our proposed estimator when the biomarker is discrete. Section~\ref{sec:2p} presents an extension of this estimation scheme to two-phase sampling studies. Section~\ref{sec:cont} presents an estimator that can be used when the biomarker is continuous. For simplicity, this estimator is presented in a single-phase sampling setup, but the extension to two-phase sampling is straightforward. Section~\ref{sec:sim} presents a simulation study. Section~\ref{sec:disc} closes with a discussion.

Appendix~\ref{app:proofs} provides proofs of the results from the main text. Appendix~\ref{app:cont} presents theorems and proofs for the validity of the estimator for continuous biomarkers presented in Section~\ref{sec:cont}. Appendix~\ref{app:EIF} gives a brief review of the derivation of efficient influence functions, which are used to define our estimators. Appendix~\ref{app:twophase} presents a TMLE that can be used when the data was derived from a two-phase sampling design. Appendix ~\ref{app:sim} gives additional simulation results.

\section{Notation and Identifiability}\label{sec:prelim}
Consider the counterfactual data structure $(W,A,S_1,Y_0,Y_1,S_0^c)$ drawn from a distribution $P^F$, where $W$ is a baseline covariate, $A$ is a treatment indicator, $S_1$ is a post-treatment biomarker, $Y_a$ is a counterfactual outcome of interest under treatment $a$ that takes on values zero and one, and $S_0^c$ is the counterfactual post-crossover biomarker had the subject not been treated at baseline and subsequently been crossed over to treatment. In a closeout placebo vaccination study, $S_1$ represents a post-vaccination immune response, and $S_0^c$ denotes the closeout biomarker measurements for placebos with $Y=0$, and $S_0^c=0$ for all other subjects. To simplify presentation, we assume that $S_1$ and $S_0^c$ are measured immediately following treatment. One could alternatively assume that no events occur before the time at which $S_1$ is measured \citep{Follmann2006}, or could assume some form of equal clinical risk, which roughly states that the risk of the event occurring before this time point is equal between treated and untreated subjects \citep{Wolfson&Gilbert2010}.

The objective is to estimate a contrast between the principally stratified mean outcome on treatment and in the absence of treatment, i.e. between $\E^F[Y_1|S_1=s_1^\star]$ and $\E^F[Y_0|S_1=s_1^\star]$, where here and throughout we let $\E^F$ denote an expectation under $P^F$ and we define $s_1^\star$ as a particularly interesting value of the biomarker $S_1$. For example, we may be interested in estimating the principally stratified relative risk 
\[
\RR^F(s_1^\star)\equiv \frac{\E^F[Y_1|S_1=s_1^\star]}{\E^F[Y_0|S_1=s_1^\star]}
\] 
or vaccine efficacy
\begin{equation}
\text{VE}^F(s_1^*) \equiv 1 - \frac{\E^F[Y_1|S_1=s_1^\star]}{\E^F[Y_0|S_1=s_1^\star]}
\label{equ:ve}
\end{equation}
One may also be interested in $\RR^F$ or $\text{VE}^F$ as a curve, across all values of $s_1^\star$. The developments in this work can also be used to estimate an additive contrast, thereby giving the conditional additive treatment effect.

In practice, we observe $(W,A,S,Y,S^c)\sim P$. We wish to identify the principally stratified mean outcomes with a parameter of this observed data distribution $P$. We make the following identifiability assumptions, where we note that here and throughout we use $\E$ to denote an expectation under $P$:

\begin{enumerate}[series=identassumptions,label=(A\arabic*),ref=A\arabic*]
	\item\label{it:sutvacons} A draw of $O=(W,A,S,Y,S^c)$ from $P$ has the same distribution as a draw of \allowbreak$(W,A,\Ind_{A=1}S_1,Y_A,\Ind_{A=0} S_0^c)$ from $P^F$.
\end{enumerate}
The above is implied by the consistency assumption, which states that $S_1=S$ if $A=1$, $Y=Y_A$, and $S^c=S_0^c$ if $A=0$. 
We also make the following assumptions:
\begin{enumerate}[resume=identassumptions,label=(A\arabic*),ref=A\arabic*]
	\item\label{it:ignorable} Ignorable treatment assignment: $A\independent (S_1,Y_0,Y_1) | W$, and
	\item\label{it:crossover} Crossover assumption: for $P^F$ almost all $w$, $S_1|Y_0=0,W=w$ under $P^F$ has the same distribution as $S_0^c|Y_0=0,W=w$ under $P^F$.
\end{enumerate}
This assumption is ill-defined when $P^F(Y_0=0|W=w)=0$, but we note that it could be replaced by equality of the conditional subdistribution functions $S_1,Y_0=0|W=w$ and $S_0^c,Y_0=0|W=w$ under $P^F$. If only the effect at a single $s_1^\star$ is of interest, then the final assumption above could be replaced by the weaker assumption that, for $P^F$ almost all $w$, $P^F(S_1=s_1^\star|Y_0=0,W=w)=P^F(S_0^c=s_1^\star|Y_0=0,W=w)$. Note that, in either case, the third assumption above is implied by $S_0^c=S_1$ almost surely, but does not require this. As an example of the greater generality of the above assumption, it allows for a true underlying biomarker to be measured with error, where the conditional probability statement is then interpreted for the noised measurement of the underlying biomarker. Note also that this assumption becomes weaker as the baseline covariate $W$ become more predictive of $S_1$ and $S_0^c$. Therefore, using the richest possible baseline covariate by including all available subject-level information is expected to yield the most plausible crossover assumption.

Throughout we also require the strong positivity assumption that, for some $\delta>0$, the treatment mechanism $P(A=1|W)$ falls between $\delta$ and $1-\delta$ with probability one over $W\sim P$. We will also assume that there exists some fixed $\delta>0$ so that any estimate of the treatment mechanism discussed in this paper satisfies the strong positivity assumption with probability approaching one. In Appendix~\ref{app:ident}, we prove the following identifiability result.
\begin{theorem}\label{thm:identifiability}
If \ref{it:sutvacons}, \ref{it:ignorable}, and \ref{it:crossover} hold, then:
\begin{enumerate}
	\item\label{it:margS1} $P^F(S_1=s_1^\star) = \E[P(S=s_1^\star|A=1,W)]$,
	\item\label{it:Y1andS1} $P^F(Y_1=1,S_1=s_1^\star) = \E\left[P(Y=1,S=s_1^\star|A=1,W) \right]$, and
	\item\label{it:notY0andS1} $P^F(Y_0=0,S_1=s_1^\star) = \E\left[P\left(Y=0,S^c=s_1^\star\middle|A=0,W\right)\right]$.
\end{enumerate}
\end{theorem}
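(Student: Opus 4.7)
The plan is to treat the three claims via a common template: write the counterfactual probability as an iterated expectation over $W$, use ignorability (A2) to insert $A$ into the conditioning set, and then invoke the distributional equality (A1) to swap counterfactual variables for their observed counterparts. Only the third claim additionally requires the crossover assumption (A3). I would also rely at the outer-expectation step on the fact that (A1) forces $P$ and $P^F$ to agree on the marginal law of $W$.

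For part \ref{it:margS1}, I would write $P^F(S_1 = s_1^\star) = \E^F[P^F(S_1 = s_1^\star \mid W)]$, use $A \independent S_1 \mid W$ from (A2) to insert $A = 1$ into the conditioning, and then note that on $\{A=1\}$ the identity $\Ind_{A=1}S_1 = S_1$ combined with (A1) identifies the conditional law of $S_1$ given $(A=1,W)$ under $P^F$ with that of $S$ given $(A=1,W)$ under $P$. The outer expectation then becomes $\E[P(S=s_1^\star \mid A=1, W)]$. Part \ref{it:Y1andS1} is the same argument applied to the pair $(Y_1, S_1)$, using $A \independent (S_1, Y_1) \mid W$ and the identification $(Y_A, \Ind_{A=1}S_1) = (Y_1, S_1)$ on $\{A=1\}$.

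Part \ref{it:notY0andS1} is the main step and the main obstacle, because $S_1$ appears in the counterfactual target but $S^c$ (not $S$) is what is observed when $A=0$. I would factor, for $P^F$ almost all $w$,
\[
P^F(Y_0 = 0, S_1 = s_1^\star \mid W = w) = P^F(Y_0 = 0 \mid W = w)\,P^F(S_1 = s_1^\star \mid Y_0 = 0, W = w),
\]
apply the crossover assumption (A3) to replace the rightmost factor by $P^F(S_0^c = s_1^\star \mid Y_0 = 0, W = w)$, and recombine to $P^F(Y_0 = 0, S_0^c = s_1^\star \mid W = w)$. Ignorability (which in a randomized trial extends from the stated (A2) to cover $S_0^c$ as well) then lets me insert $A = 0$ into the conditioning, after which (A1) rewrites $(Y_0, S_0^c)$ on $\{A = 0\}$ as the observed $(Y, S^c)$, giving $P(Y = 0, S^c = s_1^\star \mid A = 0, W)$. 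Taking the $W$-marginal expectation under $P$ closes the argument.

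The one technical wrinkle is the null set on which $P^F(Y_0 = 0 \mid W = w) = 0$, where the factorization used to apply (A3) is ill-defined; this is precisely the case that the paragraph following (A3) addresses by reformulating the crossover assumption in terms of the subdistributions of $(S_1, \Ind_{Y_0 = 0})$ and $(S_0^c, \Ind_{Y_0 = 0})$ given $W$. Replacing the conditional-law statement by that equivalent subdistributional statement makes the relevant identity hold $P^F$-almost surely in $w$, and the chain above goes through verbatim.
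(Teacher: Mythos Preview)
Your proposal is correct and follows essentially the same route as the paper's proof: tower over $W$, use ignorability to insert the relevant treatment level, and invoke (A1) to pass to observed quantities, with (A3) supplying the $S_1 \to S_0^c$ swap in part~\ref{it:notY0andS1}. You are in fact slightly more careful than the paper in two places: you flag that (A2) as written does not explicitly include $S_0^c$ (the paper simply cites \ref{it:ignorable} at that step), and you address the null set where $P^F(Y_0=0\mid W=w)=0$ via the subdistributional reformulation mentioned after (A3).
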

These identifiability results enable the identification of $P^F(Y_1=1|S_1=s_1^\star) = \frac{P^F(Y_1=1,S_1=s_1^\star)}{P^F(S_1=s_1^\star)}$ and of $P^F(Y_0=1|S_1=s_1^\star) = 1-\frac{P^F(Y_0=0,S_1=s_1^\star)}{P^F(S_1=s_1^\star)}$, and therefore also enable the identification of the principally stratified relative risk or any other contrast of these two quantities.

It will be convenient to define parameters mapping from our statistical model $\mathcal{M}$, i.e. the nonparametric model that at most places restrictions on the probability of receiving treatment $A=1$ given covariates $W$, to the real line. In particular, for an arbitrary distribution $P'\in\mathcal{M}$, we define parameters corresponding to each of the above identifiability results, where to ease notation we omit the dependence of these parameters on the choice of $s_1^\star$:
\begin{enumerate}
	\item $\Psi_{1}(P')\equiv \E_{P'}[P'(S=s_1^\star|A=1,W)]$,
	\item $\Psi_{2}(P')\equiv \E_{P'}\left[\E_{P'}\left[Y\middle|A=1,S=s_1^\star,W\right]P'(S=s_1^\star|A=1,W) \right]$, and
	\item $\Psi_{3}(P')\equiv \E_{P'}\left[P'\left(S^c=s_1^\star\middle|A=0,Y=0,W\right)P'(Y=0|A=0,W)\right]$.
\end{enumerate}
Note that, under our identifiability conditions, the principally stratified relative risk is equal to
\begin{align}
\RR^P(s_1^\star)&\equiv \frac{\Psi_2(P')}{\Psi_1(P')-\Psi_3(P')}. \label{eq:RRobs}
\end{align}
While the numerator above is bounded between 0 and 1, the denominator may be negative. Therefore, it is possible that the above, which is supposedly identified with a counterfactual relative risk, is negative. This is of course impossible, and therefore indicates a failure of our identifiability assumptions. In Theorem~\ref{thm:falsifiability} of Appendix~\ref{app:falsifiability} we show that, under \ref{it:sutvacons} and \ref{it:ignorable}, the assumption \ref{it:crossover} is not falsifiable from the observed data distribution if $\Psi_4(P)=0$, where
\begin{align*}
\Psi_4(P')\equiv \int \int \Bigl(P'\left(S^c=s_1,Y=0\middle|A=0,w\right) - P'(S=s_1|A=1,w)\Bigr)^+ dP'(S=s_1|A=1,w)dP'(w).
\end{align*}
Above $x^+$ denotes the positive part of $x$. 
If, on the other hand, \ref{it:sutvacons} and \ref{it:ignorable} hold and $\Psi_4^F(P^F)>0$, then it is easy to show that \ref{it:crossover} cannot hold. Our statistical inference for the principally stratified relative risk will hinge on \ref{it:crossover}. The fact that $\Psi_4(P)>0$ implies that \ref{it:crossover} is false suggests that one could provide a test of \ref{it:crossover}, and perform this test as a sanity check before proceeding with inference for the principally stratified relative risk. We do not further consider such a test here.

\section{Estimation for discrete biomarker}\label{sec:estDisc}
Until otherwise stated, we work at fixed $s_1^\star$. Suppose we observe $n$ i.i.d. samples $O_1,\ldots,O_n\sim P$, where $O_i=(W_i,A_i,S_i,Y_i,S_i^c)$. The notational overload on subscripts (indexing subject $i$ and counterfactuals) will not be problematic because the remainder of this work focuses only on observed data quantities.

We will present an estimator of $\mb{\Psi}(P)\equiv (\Psi_k(P) : k=1,2,3)\equiv (\psi_k : k=1,2,3)$ and show that it has a normal limiting distribution. This will enable both a test of \ref{it:crossover} and the construction of confidence intervals for the quantity $\RR^P(s_1^\star)$ that is, under conditions discussed in the previous section, identified with the principally stratified relative risk. Before the presentation of this estimator, whose construction is somewhat involved, Section~\ref{sec:motivation} motivates the theoretical development needed to define the estimator. Section~\ref{sec:estimation} presents the estimator.

\subsection{Motivation}\label{sec:motivation}
We will derive the efficient influence function $\mb{D}^{P'} : \mathcal{O}\rightarrow\mathbb{R}^3$ of the parameter $\mb{\Psi}$, where we note that, as the notation suggests, the influence function depends on the distribution $P'$ at which $\mb{\Psi}$ is evaluated. While a more thorough review of the definition of efficient influence functions is given in Appendix~\ref{app:EIF}, the key result is that they yield a first-order expansion of the form:
\begin{align}
\mb{\Psi}(P')-\mb{\Psi}(P)&= -\E_P\left[\mb{D}^{P'}(O)\right] + \Rem(P,P'), \label{eq:firstord}
\end{align}
where above $\Rem(P,P')\in\mathbb{R}^3$ is small (in Euclidean norm) relative to the leading term on the right whenever $P'$ is close to $P$. In our setting, $P'$ will denote an estimate of $P$, where one really only needs to estimate the conditional expectations and probabilities under $P$ needed to evaluate $\mb{\Psi}$ and $\mb{D}$, rather than the entire distribution. Once this result has been established, we will use it to develop a targeted minimum loss-based estimator (TMLE) of the three-dimensional estimand $\boldsymbol{\psi}$. We can then estimate any smooth function of this quantity, e.g. a relative risk.

Before presenting the TMLE, which is somewhat complex, we provide intuition as to its strong theoretical properties. Suppose we have estimated the components of $P$ needed to evaluate $\mb{\Psi}$ and $\mb{D}$. Denote the estimate by $\hat{P}$. Further suppose that this initial estimate is close to $P$ in the sense that $\Rem(P,\hat{P})\approx 0$, where we later give an explicit expression for this remainder and also give an exact rate requirement to make the approximation symbol precise. Then, by \eqref{eq:firstord}, one has that $\mb{\Psi}(\hat{P})-\mb{\Psi}(P)\approx -\E_P[\mb{D}^{\hat{P}}(O)]$. Because $\mb{D}^{P}$ is mean zero when applied to draws of $O\sim P$, if $\hat{P}$ is close to the true data generating distribution then it is expected that the right-hand side is close to zero. Nonetheless, when $W$ is continuous it is difficult to quantify how close to zero this expectation is. The TMLE overcomes this challenge by obtaining an estimate $\hat{P}^*$ that is a slight fluctuation of $\hat{P}$, where this fluctuated estimate now satisfies
\begin{align}
\frac{1}{n}\sum_{i=1}^n \mb{D}^{\hat{P}^*}(O_i) = \mb{0}. \label{eq:tmleeqn}
\end{align}
In fact, it would suffice for the left-hand side to converge to zero in probability faster than the inverse of the square root of the sample size. 
Arguments from M-estimation can be used to show that the fluctuation step does not destroy the convergence properties of the initial estimate $\hat{P}$, so that $\Rem(P,\hat{P}^*)\approx 0$ whenever this statement holds for the initial, unfluctuated estimator. Combining the above with \eqref{eq:firstord}, we have that
\begin{align*}
\mb{\Psi}(\hat{P}^*)-\mb{\Psi}(P)&\approx \frac{1}{n}\sum_{i=1}^n \left\{\mb{D}^{\hat{P}^*}(O_i) - \E_P\left[\mb{D}^{\hat{P}^*}(O)\right]\right\}.
\end{align*}
Multiplying both sides by the square root of sample size, and noting that, under some minor regularity conditions, $\mb{D}^{\hat{P}^*}$ can be replaced by its mean-square limit (often, though not always, this will be $\mb{D}^P$), we obtain the convergence in distribution result $n^{1/2}\left[\mb{\Psi}(P')-\mb{\Psi}(P)\right]\overset{d}{\longrightarrow} N(\mb{0},\Sigma)$, where the covariance matrix $\Sigma$ can be estimated using the empirical covariance of $\mb{D}^{\hat{P}^*}(O)$. Confidence intervals for \eqref{eq:RRobs} can be derived via the delta method.

Having presented the high-level arguments for our TMLE, we devote the remainder of this section to formally establishing the validity of this estimator. First, we derive the efficient influence functions of the three parameters of interest. Next, we present a TMLE that fluctuates an initial estimate of $P$ so that \eqref{eq:tmleeqn} is satisfied. We then present a theorem giving regularity conditions for the convergence in distribution to a multivariate normal, and we also observe that these regularity conditions can be weakened via cross-validation. Finally, we combine a log transformation with the delta method to demonstrate how to construct confidence intervals for contrasts between the treatment-specific principally stratified risks.

\subsection{Presentation of estimator}\label{sec:estimation}
Per the above Motivation section, the efficient influence function of $\mb{\Psi}$ should play a major role in estimation. Therefore, the following result, which presents an expression for the efficient influence function, will be useful. Before presenting the result, we note that throughout we will use the following conventions for any distribution $P'$, function $f$, and realizations $(a,w)$ of $(A,W)$: $\E_{P'}[f(O)|a,w]=\E_{P'}[f(O)|A=a,W=w]$ and $P'(a|w)= P'(A=a|W=w)$.
\begin{theorem}\label{thm:eif}
Within the model $\mathcal{M}$ that at most places restrictions on the probability of receiving treatment given baseline covariates, the parameter $\mb{\Psi}$ has efficient influence function $\mb{D}^{P'}\equiv (D_k^{P'} : k=1,2,3)$ at $P'$, where, for $k=1,2,3$,
\begin{align*}
D_k^{P'}(o)&\equiv \frac{\indicator{a=a_k}}{P'(a|w)}\left\{f_k(o) - \E_{P'}[f_k(O)|a_k,w] \right\} + \E_{P'}[f_k(o)|a_k,w] -  \Psi_k(P').
\end{align*}
Above, $f_1(o)\equiv \Ind_{S=s_1^\star}$ and $a_1=1$; $f_2(o)\equiv \Ind_{Y=1,S=s_1^\star}$ and $a_2=1$; and $f_3(o)\equiv \Ind_{Y=0,S^c=s_1^\star}$ and $a_3=0$.
\end{theorem}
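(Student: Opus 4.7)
The three parameters fit the common g-computation template
\[
\Psi_k(P') = \E_{P'}\bigl[\bar{f}_k^{P'}(W)\bigr], \qquad \bar{f}_k^{P'}(w) \equiv \E_{P'}[f_k(O) \mid A=a_k, W=w],
\]
with $(f_k, a_k)$ as in the statement. For $k=1$ this is immediate; for $k=2$ and $k=3$ it follows by recognizing the products in the definitions of $\Psi_2$ and $\Psi_3$ as the conditional probabilities $P'(Y=1, S=s_1^\star \mid A=1, W)$ and $P'(Y=0, S^c=s_1^\star \mid A=0, W)$, respectively. Once this unification is in hand, I would derive the efficient influence function for the template once and recover all three cases as instances.

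I would then work along a one-dimensional regular parametric submodel $\{P_\epsilon\} \subset \mathcal{M}$ through $P$, factored as $P_\epsilon(w)\,P_\epsilon(a|w)\,P_\epsilon(s, y, s^c|a, w)$, so that the score decomposes as $s(O) = s_1(W) + s_2(A|W) + s_3(S, Y, S^c|A, W)$ with each piece mean-zero given its natural conditioning variables. Because $\Psi_k$ depends on $P$ only through $P(W)$ and $P(S, Y, S^c \mid A=a_k, W)$, the pathwise derivative at $\epsilon=0$ is
\[
\int \bar{f}_k(w)\,s_1(w)\,dP(w) + \int\!\!\int f_k(w,a_k,s,y,s^c)\,s_3(s, y, s^c \mid a_k, w)\,dP(s,y,s^c \mid a_k, w)\,dP(w),
\]
with no contribution from $s_2$.

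Next, I would verify the representer identity $\frac{d}{d\epsilon}\Psi_k(P_\epsilon)\big|_{\epsilon=0} = \E_P[D_k^P(O)\,s(O)]$ for the candidate $D_k^{P'}$ in the statement via a single conditioning computation. The key fact is that $\E_P[D_k^P(O) \mid A, W] = \bar{f}_k(W) - \psi_k(P)$ for either value of $A$, because the inverse-probability-weighted first summand has conditional mean zero given $(A, W)$. This identity (i) makes $D_k^P$ orthogonal to every $s_2(A|W)$, since such scores have conditional mean zero given $W$; (ii) reproduces the $s_1$-integral; and (iii) combined with $\E_P[\Ind_{A=a_k}/P(a_k|W) \mid W] = 1$ and the score property of $s_3$, reproduces the $s_3$-integral. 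Together these establish $D_k^P$ as a gradient of $\Psi_k$ at $P$.

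Finally, to upgrade gradient to efficient influence function I would note that the tangent space of $\mathcal{M}$ at $P$ contains the entire $s_1$ and $s_3$ subspaces and, at most, all of the $s_2$ subspace, so it consists of everything that is orthogonal to a (possibly trivial) subspace of $s_2$ directions forbidden by the restriction on $P(A|W)$. Since $D_k^P$ has zero component in the $s_2$ direction altogether, it lies in the tangent space, and a gradient lying in the tangent space is the unique efficient influence function. The main obstacle I anticipate is purely bookkeeping: keeping the three instantiations $(f_k, a_k)$ parallel, and carefully handling the weight $\Ind_{A=a_k}/P(a_k|W)$ when conditioning against $s_3$, especially in the $k=3$ case where $f_3$ depends on both $Y$ and $S^c$ rather than on $S$.
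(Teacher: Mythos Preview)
Your proposal is correct and rests on exactly the same idea as the paper: recognize that each $\Psi_k$ has the common g-computation form $\E_{P'}[\E_{P'}[f_k(O)\mid A=a_k,W]]$ and then invoke the efficient influence function for that template. The only difference is one of presentation: the paper's proof is two lines, citing the gradient of $P'\mapsto \E_{P'}[\E_{P'}[f(O)\mid A=a,W]]$ as a well-known result from the semiparametric efficiency literature \citep{vdL02} and then matching $\Psi_k=\Phi^{f_k,a_k}$, whereas you supply a self-contained derivation via the score decomposition, representer identity, and tangent-space argument. Your route is more informative for a reader who does not already know the cited result, and it makes explicit why the restriction on $P(A\mid W)$ does not change the efficient influence function (orthogonality to the $s_2$ component); the paper's route is terser but relies on the reader accepting the citation.
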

The proof of this result is in Appendix~\ref{app:est}. It easy to verify that the remainder $\Rem(P,P')\equiv (\Rem_k(P,P') : k=1,2,3)$ in \eqref{eq:firstord} takes the following form:
\begin{align}
\Rem_k(P,P')&= \E_P\left[\left(1-\frac{P(a_k|W)}{P'(a_k|W)}\right)\left(\E_{P'}[f_k(O)|a_k,W]-\E_P[f_k(O)|a_k,W]\right)\right]. \label{eq:rem}
\end{align}
Using this efficient influence function, it remains to develop an estimator satisfying \eqref{eq:tmleeqn}. This estimator is given in Algorithm~\ref{alg:tmle}, which independently estimates each parameter $\Psi_k(P)$, $k=1,2,3$, by invoking a TMLE for estimating $\E\E[f(O)|A=a,W]$ for an arbitrary function $f$ and treatment $a$. Many variants of this $k$-specific TMLE have been presented elsewhere \citep[e.g.,][]{vanderLaan&Rose11}. We note here that we are in fact implementing three separate TMLEs, and so it is not immediately obvious that there exists a single distribution $\hat{P}^*$ such that our estimator is equal to $\Psi(\hat{P}^*)$. We can show that a unique distribution exists if $\hat{P}_2^*(f_2(O)=1|a_2,w)\le \hat{P}_1^*(f_1(O)=1|a_1,w)$ for all $w$.

\begin{algorithm}
\caption{TMLE for Estimating $\Psi_1(P)$, $\Psi_2(P)$, and $\Psi_3(P)$}\label{alg:tmle}
\begin{algorithmic}
\Statex Takes as input $n$ observations $\textnormal{Obs}\equiv\{O_i : i=1,\ldots,n\}$.
\Function{TMLE}{$\textnormal{Obs}$}
	\State \parbox[t]{\dimexpr\linewidth-\algorithmicindent}{\textbf{Initial Estimates:} Define an initial estimator $\hat{P}$ of $P$:
\begin{itemize}
	\item The marginal distribution of $W$ under $\hat{P}$ should be the empirical.
	\item Only the estimates of the components of $P(O|W)$ needed to evaluate $\Psi_k$ and $D_k$, $k=1,2,3$, are needed: $\hat{P}(A=1|W=\cdot)$, $\hat{P}(Y=1|S=s_1^\star,A=1,W=\cdot)$, $\hat{P}(S=s_1^\star|A=1,W=\cdot)$, and $\hat{P}(Y=0,S^c=s_1^\star|A=0,W=\cdot)$.
\end{itemize}
	\Comment{Conditional probability estimates should fall in $(0,1)$.}}
	\For{$k=1,2,3$}\Comment{Recall the definitions of $f_k$ and $a_k$ from Theorem~\ref{thm:eif}.}
		\State \parbox[t]{\dimexpr\linewidth-\algorithmicindent-\algorithmicindent}{\textbf{Fluctuation for Targeting Step:} Using observations $i=1,\ldots,n$, fit the intercept $\hat{\epsilon}_k$ using an intercept-only logistic regression with outcome $f_k(O_i)$, offset $\logit \hat{P}\left\{f_k(O)=1|a,w\right\}$, and weights $\frac{\Ind_{A_i=a_k}}{\hat{P}(A_i|W_i)}$.}
		\State \parbox[t]{\dimexpr\linewidth-\algorithmicindent-\algorithmicindent}{\textbf{Targeting Step:} We now define a fluctuation of $\hat{P}_k^*$ of $\hat{P}$, targeted towards estimation of $\psi_k$. Define 
		\begin{align*}
		\hat{P}_k^*\left\{f_k(O)=1|a_k,w\right\}&\equiv \expit\left\{\logit \hat{P}\left\{f_k(O)=1|a_k,w\right\} + \hat{\epsilon}_k\right\}.
		\end{align*}
		The marginal distribution of $W\sim \hat{P}$ does not need to be fluctuated.
			}
		\State \parbox[t]{\dimexpr\linewidth-\algorithmicindent-\algorithmicindent}{\textbf{Plug-In Estimator:} The estimator of $\psi_k$ is $\hat{\psi_k}\equiv \Psi_k(\hat{P}_k^*)$, and the estimate of the corresponding component of the influence function is $\hat{D}_k\equiv D_k^{\hat{P}_k^*}$.
		}
	\EndFor
	\State \parbox[t]{\dimexpr\linewidth-\algorithmicindent}{\textbf{return} Estimate $\boldsymbol{\hat{\psi}}\equiv (\hat{\psi}_k : k=1,2,3)$ and estimated influence function $\mb{\hat{D}}\equiv (\hat{D}_k : k=1,2,3)$.}
\EndFunction
\end{algorithmic}
\end{algorithm}

We now present the limiting distribution of the proposed estimator. This result relies on the following conditions:
\begin{itemize}
	\item $\mb{\hat{D}}$ belongs to a fixed Donsker class \citep{vanderVaartWellner1996} with probability approaching one;
	\item $\int [\mb{\hat{D}}(o)-\mb{D}(o)]^2 dP(o)$ converges to zero in probability as $n\rightarrow\infty$;
	\item all remainders are negligible: $\max_k \Rem_k(P,\hat{P}_k^*)=o_P(n^{-1/2})$.
\end{itemize}
\begin{theorem}\label{thm:al}
Under the above listed conditions, there exists a covariance matrix $\Sigma$ such that
\begin{align*}
n^{1/2}\left[\boldsymbol{\hat{\psi}}-\boldsymbol{\psi}\right]&= n^{-1/2}\sum_{i=1}^n\mb{D}^{P}(O_i) + o_P(1)\overset{d}{\longrightarrow} \textnormal{Normal}(\mb{0},\Sigma),
\end{align*}
\end{theorem}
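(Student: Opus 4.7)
The plan is to invoke the first-order expansion \eqref{eq:firstord} at $P' = \hat P_k^*$ for each $k\in\{1,2,3\}$, which gives
\[ \hat\psi_k - \psi_k = -\E_P\bigl[D_k^{\hat P_k^*}(O)\bigr] + \Rem_k(P,\hat P_k^*). \]
The first task is to check that Algorithm~\ref{alg:tmle} actually enforces the TMLE score equation \eqref{eq:tmleeqn} componentwise, i.e. $\tfrac{1}{n}\sum_i D_k^{\hat P_k^*}(O_i)=0$ for each $k$. The intercept-only weighted logistic fluctuation has score
\[ \sum_{i=1}^n \frac{\Ind_{A_i=a_k}}{\hat P(A_i\mid W_i)}\bigl[f_k(O_i)-\hat P_k^*\{f_k(O)=1\mid a_k,W_i\}\bigr]=0 \]
at $\hat\epsilon$, which is exactly $\sum_i$ of the clever-covariate portion of $D_k^{\hat P_k^*}(O_i)$. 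Because $\hat P$ uses the empirical distribution on $W$, the remaining pieces $\E_{\hat P_k^*}[f_k(O)\mid a_k,w]-\Psi_k(\hat P_k^*)$ of $D_k^{\hat P_k^*}$ also sum to zero by construction. Substituting back gives
\[ \hat\psi_k - \psi_k = (P_n-P)\, D_k^{\hat P_k^*} + \Rem_k(P,\hat P_k^*), \]
where $P_n$ denotes the empirical measure and $(P_n-P)f$ is shorthand for $\tfrac{1}{n}\sum_i f(O_i) - \E_P[f(O)]$.

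Next I would decompose $(P_n-P)D_k^{\hat P_k^*} = (P_n-P)D_k^P + (P_n-P)(D_k^{\hat P_k^*}-D_k^P)$. The first summand is a centered empirical average of a fixed, square-integrable function. The second is the main technical obstacle; I would control it by the standard equicontinuity result for empirical processes indexed by Donsker classes: since $\mb{\hat D}$ lies in a fixed Donsker class with probability tending to one and $\int (D_k^{\hat P_k^*}-D_k^P)^2\,dP$ converges to zero in probability, it follows that $(P_n-P)(D_k^{\hat P_k^*}-D_k^P) = o_P(n^{-1/2})$. The assumed negligibility of the remainders handles $\Rem_k(P,\hat P_k^*)$.

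Assembling the three components into the vector statement yields the asymptotically linear representation
\[ n^{1/2}\bigl(\boldsymbol{\hat\psi}-\boldsymbol\psi\bigr) = n^{-1/2}\sum_{i=1}^n \mb D^P(O_i) + o_P(1), \]
and the multivariate central limit theorem applied to the fixed, mean-zero, square-integrable vector $\mb D^P(O)$ delivers convergence to $\Normal(\mb 0,\Sigma)$ with $\Sigma = \Var_P\bigl(\mb D^P(O)\bigr)$. The principal difficulty is the empirical-process linearization step. A subtlety worth flagging is that although Algorithm~\ref{alg:tmle} runs three separate fluctuations, the argument above handles each coordinate on its own since each TMLE solves its own score equation; joint convergence then comes automatically from the multivariate CLT applied to the common, fixed influence function $\mb D^P$, which also recovers the cross-coordinate covariances appearing in $\Sigma$.
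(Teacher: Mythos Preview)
Your proposal is correct and follows exactly the standard TMLE argument that the paper alludes to; note that the paper actually omits the proof entirely, citing \cite{vanderLaan&Rose11} and \cite{Kennedy2016} for the routine details. Your write-up supplies precisely those details: verifying that the weighted intercept-only logistic fluctuation solves the coordinatewise score equation, invoking the first-order expansion \eqref{eq:firstord}, applying asymptotic equicontinuity of the empirical process over the Donsker class together with the $L^2(P)$-consistency of $\mb{\hat D}$, and finishing with the multivariate CLT.
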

The proof of this result is omitted because very similar proofs have already appeared in the literature many times \citep[e.g.,][for overviews]{vanderLaan&Rose11,Kennedy2016}. The limiting covariance matrix $\Sigma$ can be consistently estimated using the empirical covariance of $\mb{D}^{\hat{P}_k^*}(O)$.

To develop sufficient conditions for $\max_k \Rem_k(P,\hat{P}_k^*)$ to be negligible, one can use that each $\Rem_k(P,\hat{P}_k^*)$ is upper bounded by a constant times the product of the root-mean-square distance between $\hat{P}_k^*(a_k|W=\cdot)$ and $P(a_k|W=\cdot)$ and the root-mean-square distance between $\hat{P}_k^*(f_k(O)=1|a_k,W=\cdot)$ and $P(f_k(O)=1|a_k,W=\cdot)$. In a parametric model, each of these root-mean-square distances is $O_P(n^{-1/2})$, and outside of a parametric model sufficient smoothness will enable several estimators to ensure that the product of these two rates is $o_P(n^{-1/2})$. Examples include generalized additive models \citep{Hastie&Tibshirani90,Horowitz2009} and the highly adaptive lasso estimator \citep{vanderLaan2017}. If the data are generated from a randomized clinical trial, and the known treatment assignment probabilities are used, then each $\Rem_k(P,\hat{P}_k^*)$ is exactly zero.

The Donsker condition in the theorem above restricts the flexibility of the initial distribution estimate $\hat{P}$. This restriction may be unpleasant in situations where the dependence of $P\left\{f_k(O)=1|a,w\right\}$ on $w$ is believed to be complicated. In these cases, one can modify the above estimator via a $V$-fold sample splitting approach, resulting in a cross-validated TMLE \citep{Zheng&vanderLaan11}. For ease of exposition, we focus on the case that $V=10$. Here, one partitions the data into ten mutually exclusive and exhaustive folds of approximately equal size. One then fits ten initial estimates of $\hat{P}$, where each initial estimate leaves out a different one of these ten folds. For each observation $i$, let $v(i)$ denote the fold that contains $i$. Let $\hat{P}^{v}$ denote the initial estimate based on the nine out of ten folds that do not include fold $v$. The sample splitting procedure then replaces each evaluation of (a conditional distribution of) $\hat{P}$ at $O_i$ within the for loop over $k$ by $\hat{P}^{v(i)}$, where we recall that $k$ indexes the three dimensions of the output of the parameter $\mb{\Psi}$. A single, non-fold-specific fluctuation $\hat{\epsilon}_k$ is still returned from the logistic regression in this for loop. The fluctuated initial estimates are still fold-specific, with
\begin{align*}
\hat{P}_k^{v,*}\{f_k(O)=1|a,w\}&\equiv \expit\left\{\logit \hat{P}^v\left\{f_k(O)=1|a,w\right\} + \hat{\epsilon}_k\right\}.
\end{align*}
The estimator of $\psi_k$ is $\hat{\psi}_k^{CV}\equiv \frac{1}{n}\sum_{i=1}^n \Psi_k(\hat{P}_k^{v(i),*})$. The estimator $\boldsymbol{\hat{\psi}}^{CV}$ should have the same multivariate normal limit as $\boldsymbol{\hat{\psi}}$ would have if the Donsker condition is satisfied, but this cross-validated TMLE will still have this multivariate normal limit even if this Donsker condition fails.

As discussed in the introduction, the delta method can be used to derive confidence intervals for any smooth mapping $\Gamma : \mathbb{R}^3\rightarrow\mathbb{R}$ of $\boldsymbol{\psi}$, where the efficient influence function is then $\langle\mb{D}^P,\dot{\Gamma}(\boldsymbol{\psi})\rangle$, with $\dot{\Gamma}$ the gradient of $\Gamma$. For a discrete biomarker, as is studied in this section, this will require that the biomarker takes on the value $s_1^\star$ with positive probability, i.e. that $\E[P(S=s_1^\star|A=1,W)]>0$. For example, for the log of the relative risk defined in \eqref{eq:RRobs}, $\Gamma(\mb{x})\equiv \log(x_2)-\log(x_1-x_3)$ and $\dot{\Gamma}(\mb{x}) = \left(-[x_1-x_3]^{-1},x_2^{-1},[x_1-x_3]^{-1}\right)$. For the risk difference on treatment $A=1$ versus on treatment $A=0$, $\Gamma(\mb{x})=(x_2-x_3)/x_1$ and $\dot{\Gamma}(\mb{x}) = x_1^{-2}(x_3-x_2,x_1,-x_1)$. In either case, the influence function of our estimator $\boldsymbol{\hat{\psi}}$ can be estimated using $\langle\mb{\hat{D}},\dot{\Gamma}(\boldsymbol{\hat{\psi}})\rangle$. Letting $\hat{\sigma}^2$ denote the empirical variance of $\langle\mb{\hat{D}}(O),\dot{\Gamma}(\boldsymbol{\hat{\psi}})\rangle$, a 95\% confidence interval for $\Gamma(\boldsymbol{\psi})$ is given by $\Gamma(\boldsymbol{\hat{\psi}})\pm 1.96n^{-1/2}\hat{\sigma}$.

\section{Estimation Under Two-Phase Sampling}\label{sec:2p}
In some settings, such as in many vaccine studies, the post-treatment biomarker $S$ and the post-crossover biomarker $S^c$ will only be measured on a subset of the subjects in the study. For example, the biomarker $S$ may be measured on all treated ($A=1$) cases ($Y=1$) and all members of a prespecified cohort randomly drawn from the set of all subjects, whereas the crossover biomarker $S^c$ may be measured on a random subset of untreated ($A=0$) controls ($Y=0$). In this section, we study estimation in the setting where the biomarkers are discrete.

Section~\ref{sec:2poverview} gives an overview of the notation and assumptions needed in this setting. Section~\ref{sec:2pipwtmle} presents an inverse probability weighted TMLE based on the framework of \cite{Rose2011}. This estimator is appealing because it only requires minor modifications to the estimator from Section~\ref{sec:estimation}, namely in the inclusion of sampling weights. When the baseline covariates are not discrete, the inverse probability weighted TMLE presented in this section may not be efficient, because it does not fully leverage the predictive power of the baseline covariates for the missing biomarker, which is especially important for subjects for whom these biomarkers are not measured. Though \cite{Rose2011} also presented a TMLE that can attain efficiency in settings where the baseline covariates are continuous, we instead present a one-step estimator for these settings, that has the advantage of being easier to describe and implement, but the disadvantage of not being a plug-in estimator.

\subsection{Overview}\label{sec:2poverview}
For treated subjects, let $\Delta$ denote the indicator of having the biomarker $S$ measured. For untreated subjects, let $\Delta$ denote the indicator of having the crossover biomarker $S^c$ measured. We use the convention that $\Delta$ is always one for untreated cases, which does not cause a problem because $S^c$ is defined to be zero for all untreated cases.

We make the missing at random assumptions that 
the indicator $\Delta$ is independent of the biomarkers $(S,S^c)$ given phase-one information $(W,A,Y)$, where we recall that $S$ is degenerate for untreated subjects and $S^c$ is degenerate for treated subjects and untreated controls. For each subject $i$, the indicator of having the biomarker measured is given by $\Delta_i$. We suppose that $\Delta_1,\ldots,\Delta_n$ is an i.i.d. sequence given $(O_1,\ldots,O_n)$, where we remind the reader that $O_i\equiv (W_i,A_i,S_i,Y_i,S_i^c)$. We let $\tilde{O}\equiv (W,A,\Delta S,Y,\Delta S^c)$ denote the censored data structure, which is equal to $O$ except that the biomarkers are unobserved for all subjects with $\Delta=0$. Let $\tilde{P}$ denote the corresponding distribution of $(\Delta,\tilde{O})$. Our estimation scheme will be a function of the observed $n$ i.i.d. draws of $\tilde{O}\sim\tilde{P}$. 

Let $\Pi(\tilde{O})$ denote the probability that $\Delta=1$ given $\tilde{O}$. For brevity, we let $\pi\equiv \Pi(\tilde{O})$ for an arbitrary $\tilde{O}\sim \tilde{P}$ and, for each subject $i$, we let $\pi_i\equiv \Pi(O_i)$. We note that $\pi_i=1$ for all untreated cases. Furthermore, in a case-cohort sampling scheme, $\pi_i=1$ for all treated cases.

\subsection{Inverse Probability Weighted TMLE}\label{sec:2pipwtmle}
In this section, we will assume that $\pi_i$ is known for each subject. If $\pi_i$ only relies on $W$ through a discrete coarsening $V$ of $W$, then the inference that we propose will be valid, albeit conservative, if $\pi_i$ is replaced by a nonparametric maximum likelihood estimator $\hat{\pi}_i$ of the probability that $\Delta=1$ conditional on $(V,A,Y)$. In fact, using this nonparametric maximum likelihood estimator typically leads to asymptotic efficiency gains even in our setting where $\pi_i$ is known, and never leads to a loss of asymptotic efficiency in this setting \citep[Theorem~2.3 of][]{vdL02}.

Defining the inverse probability weighted (IPW) TMLE requires only minor modifications to Algorithm~\ref{alg:tmle}. The modifications to our estimation scheme hinge on the fact that, for any $f$, $\E_P[f(O)]=\E_{\tilde{P}}[\Delta\pi^{-1}f(\tilde{O})]$. This fact is useful both when $f$ is a loss function, in which case it suggests to include weights $\Delta\pi^{-1}$ in the estimation procedure, and when $f$ is an influence function, in which case it suggests to weight $\mb{D}^P$ in the leading term on the right-hand side of \eqref{eq:firstord}. The estimation scheme, which represents a two-phase sampling TMLE as presented in \cite{Rose2011}, is presented in Algorithm~\ref{alg:2phasetmle} in Section~\ref{sec:2p}. When obtaining the initial estimate $\hat{P}$ of $P$, one can use loss-based approaches, e.g. minimizing the empirical mean-squared error or Kullback-Leibler divergence to estimate the needed components of $P$. The key in the two-phase sampling case is to include the inverse of the (estimated) subject-level probabilities for belonging to the second phase as weights, i.e. including as weights $\Delta_i\pi_i^{-1}$, in whatever loss-based estimation procedure that is used to estimate the needed components of $P$ \citep[see Section~3.2.2 of][]{Rose2011}.

Confidence intervals for the IPW TMLE can again be constructed using that, under regularity conditions closely related to those of Theorem~\ref{thm:al}, $n^{1/2}[\boldsymbol{\hat{\psi}}-\boldsymbol{\psi}]\overset{d}{\longrightarrow} \textnormal{Normal}(\mb{0},\tilde{\Sigma})$. The covariance matrix $\tilde{\Sigma}$ can be estimated by the empirical covariance of the inverse weighted influence function $\left(\Delta_i \bar{\pi}_i^{-1} \hat{D}_k(\tilde{O}_i) : k=1,2,3\right)$, where $\bar{\pi}_i$ is defined in Algorithm~\ref{alg:2phasetmle} as a rescaling of $\pi_i$ that ensures that $\Delta\bar{\pi}^{-1}$ has empirical mean one within both the treated and untreated strata of the observations. The only difference relative to the estimation of $\Sigma$ is that the influence function is now weighted by $\Delta_i \bar{\pi}_i^{-1}$. Confidence intervals for the relative risk can again be generated using the delta method.

\subsection{One-Step Estimator that Fully Leverages Continuous Baseline Covariates}\label{sec:2pee}
We now present a one-step estimator that fully leverages the predictive power of baseline covariates, treatment status, and endpoint status for the missing values of the biomarker. Under some regularity conditions, this estimator is efficient in the model in which the probability of treatment given covariates is unknown, and is only slightly inefficient in settings where the probability of treatment given covariates is known \citep{Marsh2016}. In the setting where the treatment mechanism is unknown, the efficient influence function is given by
\begin{align*}
\widetilde{\mb{D}}^{P'}(\tilde{o})&\equiv \delta \pi^{-1}\mb{D}^{P'}(\tilde{o}) + \left(1-\delta \pi^{-1}\right)\E_{P'}\left[\mb{D}^{P'}(\tilde{O})\middle|\Delta=1,a,w,y\right],
\end{align*}
where $\tilde{o}\equiv (w,a,\delta s,y,\delta s^c)$ and we abuse notation and let $\pi=\Pi(\tilde{o})$. The elements of the above vector rewrite as
\begin{align}
\widetilde{D}_k^{P'}(\tilde{o})\equiv\,& \frac{\indicator{a=a_k}}{P'(a|w)}\left\{\delta \pi^{-1}f_k(o) + \left(1-\delta \pi^{-1}\right)\E_{P'}\left[f_k(O)\middle|\Delta=1,a,w,y\right] - \E_{P'}[f_k(O)|a_k,w] \right\} \nonumber \\
&+ \E_{P'}[f_k(o)|a_k,w] -  \Psi_k(P'), \label{eq:Dtildedef}
\end{align}
To define our estimator, we first suppose that we have an estimator $\hat{P}$ of the distribution $P$ of the data that would have been observed had the probability of membership in the second phase of sampling been equal to one for all subjects. This estimator can be obtained via loss-based learning using inverse probability weighting \citep[see Section~3.2 of][]{Rose2011}. Next, we define our estimator as $\boldsymbol{\hat{\psi}}\equiv \mb{\Psi}(\hat{P}) +  \frac{1}{n}\sum_{i=1}^n \widetilde{\mb{D}}^{P'}(\tilde{O}_i)$. Let $\hat{\Sigma}_n$ denote the empirical covariance matrix of $\widetilde{\mb{D}}^{P'}(\tilde{O}_i)$, $i=1,\ldots,n$, where here we acknowledge the notational overload with the covariance matrix $\hat{\Sigma}_n$ from Section~\ref{sec:estimation} (single-phase sampling). Under some conditions, one can show that $n^{1/2}\hat{\Sigma}_n^{-1/2}\left[\boldsymbol{\hat{\psi}}-\mb{\Psi}(P)\right]\overset{d}{\longrightarrow} N(\mb{0},\Id)$, where $\Id$ denotes the $3\times 3$ identity matrix. One can then use the delta method to develop confidence intervals for any contrast based on the principally stratified means $\boldsymbol{\psi}$, such as the log relative risk.

To gain some intuition on this estimator, note that, if the baseline covariates are strongly predictive of the biomarker among treated subjects and of the crossover biomarker among uninfected untreated subjects, then the estimate of $\E_P\left[f_k(O)\middle|\Delta=1,a,w,y\right]$ will be approximately equal to $f_k(o)$. Therefore, $\delta \pi^{-1}f_k(o) + \left(1-\delta \pi^{-1}\right)\E_{P'}\left[f_k(O)\middle|\Delta=1,a,w,y\right]$ is approximately equal to $f_k(o)$, so that $\widetilde{D}_k^{P'}(\tilde{o})$ is approximately equal to the efficient influence function in the setting where phase two information is observed on everyone.

Finally, we note that, though we have assumed that $\Pi$ is known, the proposed one-step estimator will also yield valid inference if $\Pi$ is replaced by an estimate and a doubly robust term is negligible.

\section{Continuous Biomarker}\label{sec:cont}
\subsection{Algorithm and Theoretical Guarantees}
Thus far we have assumed that the biomarker is discrete. Suppose now that the biomarkers $S_1,S_0^c$ are continuous, with support in $\mathbb{R}$. We will show the methods that we have proposed immediately extend to this case once one replaces the indicators that $S=s_1^\star$ and $S^c=s_1^\star$ by kernels. The extension to the case that the biomarkers have support in $\mathbb{R}^d$, $d>1$, is straightforward by using kernels for $\mathbb{R}^d$-valued data, though, given that kernel smoothers are highly susceptible to the curse of dimensionality, we expect that this method will only yield informative estimates when $d$ is small. For simplicity, we consider the full sampling case where all treated subjects and all untreated subjects without the event have their biomarker measured. As in the previous section, our objective will be to estimate the effect of treatment conditional on a biomarker value $s_1^\star$.

This treatment effect will be a function of three parameters, which are defined with respect to the (conditional) Lebesgue densities of the biomarkers $S$ and $S^c$, which are assumed to exist. To avoid introducing new notation for these densities, we will abuse notation so that writing ``$S=s$'' or ``$S^c=s$'' in a conditional probability statement refers to the conditional density of $S$ or $S^c$ at $s$. Once we have made this abuse of notation, the definition of $\mb{\Psi}(P)$ from Section~\ref{sec:prelim} does not need to be changed to study the setting where the biomarkers are continuous. So, for example, $\Psi_1(P')\equiv \E_{P'}[P'(S=s_1^\star|A=1,W)]$ represents the Lebesgue density of $S$ at $s_1^\star$, conditional on $A=1$ and $W$, averaged across values of $W\sim P'$. 

Let $K$ denote a kernel in $\mathbb{R}$, where this function is centered at zero and integrates to one. Examples of kernels include $K(x)=\Ind\{-1\le 2x\le 1\}$ (uniform kernel), $K(x) \propto \exp(-x^2/2)$ (Gaussian kernel), and $K(x) \propto \frac{1}{2}(3-x^2)\exp(-x^2/2)$ (fourth-order Gaussian kernel), where the constants in the `$\propto$' statements are chosen so that the kernels integrate to one. The first two kernels are second-order in the sense that $\int x^t K(x) dx = 0$ for $t=1$, whereas the final kernel is fourth-order in the sense that $\int x^t K(x) dx$ for $t=1,2,3$. Generally, a kernel is of order $r$ if $\int x^t K(x) dx=0$ for all $t=1,\ldots,r-1$. We define $K_{h}$ as the function $x\mapsto h^{-1} K[xh^{-1}]$, where $h>0$ is the bandwidth.

Before presenting our algorithm, we introduce the kernel-dependent pseudo-outcomes that we will use to replace $f_k$, $k=1,2,3$. In particular, we define $f_{1,h}(O_i)\equiv K_{h}(S-s_1^\star)$, $f_{2,h}(O_i)\equiv \Ind_{Y=1} K_{h}(S-s_1^\star)$, and $f_{3,h}(O_i)\equiv \Ind_{Y=0} K_{h}(S^c-s_1^\star)$. We continue to use the definitions $a_1=1$, $a_2=1$, and $a_3=0$. For an arbitrary $P'$, we also define $D_{k,h}^{P'}$ analogously to $D_k^{P'}$ from Theorem~\ref{thm:eif}, but with each instance of $f_k$ replaced by $f_{k,h}$. We also define a smoothed version of our parameter. To do this, it will be useful to make the dependence of $\mb{\Psi}(P)$ on $s_1^\star$ explicit in the notation: in particular, we define $\mb{\Psi}(P;s)\equiv \{\Psi_k(P;s) : k=1,2,3\}$ for all $s$ in the support of $S|A=1$ under $P$. We now define and give an equivalent expression for the kernel-smoothed version of $\mb{\Psi}(P;s_1^\star)$:
\begin{align}
\mb{\Psi}_h(P)\equiv \int \mb{\Psi}(P;s) K_h(s-s_1^\star) ds = \E\left[\E\left[f_{k,h}(O)\middle|a_k,W\right]\right]. \label{eq:smoothedparam}
\end{align}
The equivalent expression relies on changing the order of integration between the integral over $s$ and the integral over $W\sim P$. We write the three components of $\mb{\Psi}_h(P)$ as $\Psi_{k,h}(P)$ for $k=1,2,3$. It will be convenient to define $q_{k,h}$ as the function $w\mapsto \E\left[f_{k,h}(O)\middle|a_k,w\right]$.

Our proposed estimation procedure is presented in Algorithm~\ref{alg:tmlecont}. In our procedure, we estimate $q_{k,h}$ for each $k=1,2,3$, and so by the latter identity in \eqref{eq:smoothedparam} it follows that we can use these expectations to evaluate $\mb{\Psi}_h$ at our estimated distribution.
\begin{algorithm}
\caption{CV-TMLE for Estimating $\Psi_1(P)$, $\Psi_2(P)$, and $\Psi_3(P)$ when the biomarker is continuous}\label{alg:tmlecont}
\begin{algorithmic}
\Statex Takes as input $n$ observations $\textnormal{Obs}\equiv\{O_i : i=1,\ldots,n\}$ and a kernel $K_{h}$.
\Statex For simplicity, focus on 10-fold cross-validation.
\Function{CV-TMLE}{$\textnormal{Obs}$}
	\For{$v=1,\ldots,10$}
	\State \parbox[t]{\dimexpr\linewidth-\algorithmicindent-\algorithmicindent}{\textbf{Initial Estimates:} Using data from training set $v$, define an initial estimator $\hat{P}^v$ of $P$:
\begin{itemize}
	\item The marginal distribution of $W$ under $\hat{P}^v$ should be the empirical of the observations in training set $v$.
	\item Using only observations in the training set $v$, compute an estimate $w\mapsto \hat{P}^v(A=1|w)$ of $w\mapsto P^v(A=1|w)$, and an estimate $w\mapsto\hat{q}_{k,h}^v(w)$ of $w\mapsto q_{k,h}(w)$.
\end{itemize}
	\Comment{Each $q_{k,h}$ should have range in $(0,\infty)$.}}
	\EndFor
	\For{$k=1,2,3$}
		\State \parbox[t]{\dimexpr\linewidth-\algorithmicindent-\algorithmicindent}{\textbf{Fluctuation for Targeting Step:} Define $\hat{\epsilon}_k$ as a minimizer (in real-valued $\epsilon$) of
		\begin{align*}
		\left(\frac{1}{n}\sum_{i=1}^n \frac{\Ind_{A_i=a_k}}{\hat{P}^v(A_i|W_i)} \left[f_{k,h}(O_i) - \exp\left\{\log \hat{q}_{k,h}^{v(i)}(W_i) + \epsilon\right\}\right]\right)^2.
		\end{align*}
		}
		\State \parbox[t]{\dimexpr\linewidth-\algorithmicindent-\algorithmicindent}{\textbf{Targeting Step:} For each fold $v$, define
		\begin{align*}
		\hat{q}_{k,h}^{v,*}(w)&\equiv \exp\left\{\log \hat{q}_{k,h}^{v}(w) + \hat{\epsilon}_k\right\}.
		\end{align*}
		The marginal distribution of $W\sim \hat{P}$ does not need to be fluctuated.
			}
		\State \parbox[t]{\dimexpr\linewidth-\algorithmicindent-\algorithmicindent}{\textbf{Estimator:} The estimator of $\psi_{k,h}$ is $\frac{1}{n}\sum_{i=1}^n \Psi_{k,h}(\hat{P}_{k,h}^{v(i),*})$, and, for each subject $i$, the estimate of the corresponding component of the smoothed influence function is $\hat{D}_{k,h,i}\equiv D_{k,h}^{\hat{P}_{k,h}^{v(i),*}}$.
		}
	\EndFor
	\State \parbox[t]{\dimexpr\linewidth-\algorithmicindent}{\textbf{return} Estimate $\boldsymbol{\hat{\psi}}_{h}\equiv (\hat{\psi}_{k,h} : k=1,2,3)$ and estimated smoothed influence function $\mb{\hat{D}}_{h,i}\equiv (\hat{D}_{k,h,i} : k=1,2,3)$.}
\EndFunction
\end{algorithmic}
\end{algorithm}

Our estimator can be analyzed using standard arguments for kernel estimators. We will break our analysis into a study of the bias and a study of the variance of our estimator. We will show that our estimator has small bias for the smoothed principally stratified mean vector $\mb{\Psi}_h(P)$ at a user-defined value of $h$. Therefore, the bias of our estimator for $\mb{\Psi}(P)$ will be driven by the convergence of a kernel-smoothed parameter to the true target parameter. As is standard for kernel estimators, the rate of convergence of the bias will be driven by the differentiability of $s\mapsto \mb{\Psi}(P;s)$ at $s$ and by the order of the kernel that we select.
\begin{theorem}[Bias for Unsmoothed Parameter]\label{thm:bias}
Let $K$ be an $r^{\textnormal{th}}$-order kernel and suppose that, for each $k\in\{1,2,3\}$, $s\mapsto \Psi_{k,h}(P;s)$ is $(t+1)^{\textnormal{th}}$-order differentiable on the real line with uniformly bounded $(t+1)^{\textnormal{th}}$ derivative. If $\int |u^{\min\{r,t+1\}} K(u)| du<\infty$, then $\norm{\mb{\Psi}_h(P)-\mb{\Psi}(P)}=O(h^{\min\{r,t+1\}})$.
\end{theorem}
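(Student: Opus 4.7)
The plan is to reduce the theorem, coordinate by coordinate, to a standard kernel-bias calculation and then assemble the Euclidean norm bound from the three coordinate bounds. By the second equality in \eqref{eq:smoothedparam} and the change of variables $u=(s-s_1^\star)/h$, one has
\[
\Psi_{k,h}(P) - \Psi_k(P;s_1^\star) = \int\bigl[\Psi_k(P;s_1^\star+hu) - \Psi_k(P;s_1^\star)\bigr]\, K(u)\, du,
\]
where I have used $\int K(u)\,du=1$. It therefore suffices to show, for each $k\in\{1,2,3\}$, that the right-hand side is $O(h^{\min\{r,t+1\}})$; the norm statement is then immediate from $\|x\|\le\sqrt{3}\max_k|x_k|$.

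For the componentwise bound, set $g_k(v)\equiv \Psi_k(P;s_1^\star+v)$ and $q\equiv\min\{r,t+1\}$. The core step is a Taylor expansion of $g_k$ about the origin truncated at order $q$, which is justified since $g_k$ is $(t+1)$-times differentiable by hypothesis and $q\le t+1$. Taylor's theorem with Lagrange remainder yields
\[
g_k(hu) = \sum_{j=0}^{q-1}\frac{(hu)^j}{j!}\,g_k^{(j)}(0) + \frac{(hu)^q}{q!}\,g_k^{(q)}(\xi_{u,h}),
\]
for some $\xi_{u,h}$ between $0$ and $hu$. Substituting into the bias integral and integrating term-by-term against $K$, each polynomial term with $1\le j\le q-1$ is annihilated: because $q-1\le r-1$ and $K$ is an $r^{\textnormal{th}}$-order kernel, $\int u^j K(u)\,du = 0$ for every such $j$. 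Only the Lagrange remainder survives.

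The last step bounds the surviving remainder using a uniform bound $M_k\equiv\sup_z|g_k^{(q)}(z)|$ on the relevant derivative, together with the kernel moment condition $\int|u^q K(u)|\,du<\infty$ built into the hypotheses:
\[
\left|\int\frac{(hu)^q}{q!}\,g_k^{(q)}(\xi_{u,h})\,K(u)\,du\right| \le \frac{M_k h^q}{q!}\int|u|^q|K(u)|\,du = O\bigl(h^{\min\{r,t+1\}}\bigr),
\]
and combining across $k=1,2,3$ yields the stated norm bound. The only mildly delicate point — essentially the only obstacle — is the choice of truncation order $q=\min\{r,t+1\}$: it must be at most $r$ so that the $r^{\textnormal{th}}$-order kernel annihilates every intermediate polynomial term, and at most $t+1$ so that the smoothness hypothesis supplies a uniform bound on $g_k^{(q)}$. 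The value $\min\{r,t+1\}$ is the unique choice meeting both constraints, and it reproduces the familiar rate from the kernel-smoothing literature arising from the interplay between kernel order and target-function smoothness.
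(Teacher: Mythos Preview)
Your proof is correct and follows essentially the same route as the paper's: Taylor-expand $s\mapsto\Psi_k(P;s)$ about $s_1^\star$, use the kernel's vanishing moments to kill the intermediate polynomial terms, and bound the Lagrange remainder via the uniform derivative bound together with the integrability of $|u|^q K(u)$. The only difference is that the paper writes out the argument explicitly only for the case $r\ge t+1$ (so the expansion runs to order $t$ with remainder at $t+1$), whereas you unify both regimes by expanding to order $q-1$ with $q=\min\{r,t+1\}$; one small caveat is that when $q=r<t+1$ your bound $M_k=\sup_z|g_k^{(q)}(z)|<\infty$ invokes uniform boundedness of a derivative of order strictly below $t+1$, which is not literally among the stated hypotheses (the paper sidesteps this by not writing out that case).
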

Next, we study the variance of our estimator, which is (asymptotically) unbiased for the parameter evaluated at the kernel-smoothed data generating distribution. Before presenting this result, we will define the covariance matrix $\Sigma_n$ to be used in the upcoming theorem. For $\bar{q}_k$ equal to the limit of the estimates of $q_{k,\hat{h}_n}$ as $n\rightarrow\infty$ (see the upcoming Condition~\ref{it:L2lim}) and $\epsilon\in\mathbb{R}$, we let
\begin{align}
g_{k,\epsilon,\bar{h}_n}(o)&\equiv \frac{\Ind_{a=a_k}}{P(a_k|w)} \left[f_{k,\bar{h}_n}(o) - \exp\left\{\log \bar{q}_k(w) + \epsilon\right\}\right] + \exp\left[\log \bar{q}_k(w) + \epsilon\right] \label{eq:glimdef}
\end{align}
and define $\Sigma_n$ as the $3\times 3$ matrix representing the covariance matrix under $O\sim P$ of $\left(\bar{h}_n^{1/2} g_{k,\bar{\epsilon}_k,\bar{h}_n}(O) : k=1,2,3\right)$, where $\bar{\epsilon}_k$, $k=1,2,3$, is defined in the upcoming Condition~\ref{it:barepsilon}. The following theorem relies on conditions that are stated in the upcoming Section~\ref{sec:conds}.
\begin{theorem}[Estimation of Smoothed Parameter]\label{thm:estsmooth}
If Conditions \ref{it:L2lim} through \ref{it:covmat} hold, then
\begin{align}
(n\hat{h}_n)^{1/2}\Sigma_n^{-1/2}\left[\boldsymbol{\hat{\psi}}_{\hat{h}_n}-\mb{\Psi}_{\hat{h}_n}(P)\right]&\overset{d}{\longrightarrow} N(\mb{0},\Id).\label{eq:triangclt}
\end{align}
\end{theorem}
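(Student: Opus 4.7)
The plan is to follow the standard TMLE asymptotic expansion, adapted to the triangular-array structure induced by the vanishing bandwidth $\hat{h}_n \to 0$ and the cross-validation. First I would verify that the targeting-step score equation from Algorithm~\ref{alg:tmlecont} delivers the CV-TMLE identity
\[
\frac{1}{n}\sum_{i=1}^n D_{k,\hat{h}_n}^{\hat{P}_k^{v(i),*}}(O_i) = o_P((n\hat{h}_n)^{-1/2}), \quad k=1,2,3,
\]
where the $o_P$ absorbs the discrepancy between evaluating $\Psi_{k,\hat{h}_n}(\hat{P}^{v,*})$ on the training-set empirical of $W$ versus on $P_n$ (controllable by the $L^2$-consistency of $\hat{q}_{k,\hat{h}_n}^{v,*}$ in Condition~\ref{it:L2lim}). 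Combining with the first-order expansion \eqref{eq:firstord} applied to the smoothed parameter fold-by-fold then yields
\[
\hat{\psi}_{k,\hat{h}_n} - \Psi_{k,\hat{h}_n}(P) = (P_n-P)D_{k,\hat{h}_n}^{\hat{P}^{v(\cdot),*}} + \frac{1}{V}\sum_{v=1}^{V}\Rem_{k,\hat{h}_n}(P,\hat{P}_k^{v,*}) + o_P((n\hat{h}_n)^{-1/2}).
\]

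Next I would dispose of the remainder and the empirical-process term separately. By \eqref{eq:rem} with $f_{k,\hat{h}_n}$ in place of $f_k$, each $\Rem_{k,\hat{h}_n}$ is bounded by the product of two $L^2(P)$ distances (for the propensity and for $q_{k,\hat{h}_n}$), and the rate conditions stated in Section~\ref{sec:conds} will be assumed to make this product $o_P((n\hat{h}_n)^{-1/2})$. For the empirical process term I would use the standard CV-TMLE conditioning trick: conditionally on the training data for fold $v$, $\hat{P}^{v,*}$ is deterministic and the validation observations are i.i.d.\ from $P$, so $\Var((P_n-P)[D_{k,\hat{h}_n}^{\hat{P}^{v,*}} - g_{k,\bar{\epsilon},\hat{h}_n}] \mid \textnormal{train}) = n^{-1}\Var_P(D_{k,\hat{h}_n}^{\hat{P}^{v,*}} - g_{k,\bar{\epsilon},\hat{h}_n})$. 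The key observation is that the kernel-loaded term $f_{k,\hat{h}_n}$ enters both influence functions with coefficients $[\hat{P}^v(a|w)]^{-1}$ and $[P(a_k|w)]^{-1}$; the difference of these coefficients is $o_P(1)$ by $L^2$-consistency of the propensity estimate, so the $f_{k,\hat{h}_n}$ contribution to the variance is $o_P(1)\cdot\Var_P(f_{k,\hat{h}_n}) = o_P(\hat{h}_n^{-1})$, giving $(P_n-P)[D_{k,\hat{h}_n}^{\hat{P}^{v,*}}-g_{k,\bar{\epsilon},\hat{h}_n}]=o_P((n\hat{h}_n)^{-1/2})$ via Chebyshev.

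Finally I would apply the Lindeberg CLT for triangular arrays to the three-vector $\{\hat{h}_n^{1/2}g_{k,\bar{\epsilon},\hat{h}_n}(O_i) : k=1,2,3\}_{i\le n}$. The per-observation covariance equals $\Sigma_n$ by definition of $\Sigma_n$, and the boundedness of the kernel together with the positivity assumption give $|g_{k,\bar{\epsilon},\hat{h}_n}(O)| \le C\hat{h}_n^{-1}$ almost surely, so the summands rescaled by $(n\hat{h}_n)^{-1/2}\hat{h}_n^{1/2}=n^{-1/2}$ are uniformly $O(n^{-1/2})$ and the Lindeberg truncated second moment vanishes whenever $n\hat{h}_n \to \infty$. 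Condition~\ref{it:covmat} will ensure $\Sigma_n$ is eventually positive definite, so Slutsky's theorem combined with the bias-free centering at $\Psi_{\hat{h}_n}(P)$ delivers \eqref{eq:triangclt}.

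The main obstacle is the empirical-process step. Because $\hat{h}_n\to 0$, the class of smoothed influence functions has envelope with $L^2(P)$ norm of order $\hat{h}_n^{-1/2}$, so it is not Donsker and the uniform stochastic equicontinuity arguments used in the fixed-$h$ proof of Theorem~\ref{thm:al} break down entirely. The cross-validation sidesteps this by permitting conditional treatment of $\hat{P}^{v,*}$ as a deterministic function of the validation observations, reducing the empirical-process control to a direct conditional variance bound; the delicate piece is the interplay between the $L^2$-convergence rates of the nuisance estimates and the bandwidth decay, which must be tracked carefully so that the variance of the influence-function difference is not merely $o_P(1)$ but actually $o_P(\hat{h}_n^{-1})$ at the right rate.
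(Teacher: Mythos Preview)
Your overall architecture---CV-TMLE identity, remainder control via \ref{it:rem}, replacement of the estimated influence function by the deterministic $g_{k,\bar{\epsilon},\bar{h}_n}$, then Lindeberg---matches the paper's. However, there is a genuine gap in your empirical-process step.

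You write that ``conditionally on the training data for fold $v$, $\hat{P}^{v,*}$ is deterministic.'' This is false for the CV-TMLE in Algorithm~\ref{alg:tmlecont}: the fluctuation parameter $\hat{\epsilon}$ is a \emph{single} scalar fitted by pooling all $n$ observations across folds, and the data-driven bandwidth $\hat{h}_n$ may likewise depend on the full sample. Consequently $\hat{q}_{k,\hat{h}_n}^{v,*}=\exp\{\log\hat{q}_{k,\hat{h}_n}^{v}+\hat{\epsilon}\}$ depends on the validation fold through $\hat{\epsilon}$ and $\hat{h}_n$, so the conditional-variance Chebyshev bound does not apply to $(P_n^v-P)D_{k,\hat{h}_n}^{\hat{P}^{v,*}}$. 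The ``standard CV-TMLE conditioning trick'' only neutralizes the \emph{initial} estimates $\hat{q}_{k,h}^v$ and $\hat{P}^v(a_k\mid\cdot)$; it does nothing for the pooled targeting step or for $\hat{h}_n$.

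The paper handles exactly this obstruction by inserting an additional step before the Chebyshev argument: it shows (Lemma~\ref{lem:vc}) that, with the training-set nuisances held fixed, the class $\{g_{k,\epsilon,h}^v:\epsilon\in\mathbb{R},\,h>0\}$ is VC-subgraph with dimension independent of the nuisances, and (Lemma~\ref{lem:envelope}) that the localized class $\{\bar{h}_n^{1/2}[g_{k,\epsilon,h}^v-g_{k,\bar{\epsilon},\bar{h}_n}^v]:|\epsilon-\bar{\epsilon}|\le\delta,\,|1-h/\bar{h}_n|\le\delta\bar{h}_n^{1/2}\}$ has envelope $c\delta$. A maximal inequality (Theorem~2.14.1 of \citealp{vanderVaartWellner1996}) then gives, conditionally on the training set, $\sup_{\epsilon,h}|(P_n^v-P)[g_{k,\epsilon,h}^v-g_{k,\bar{\epsilon},\bar{h}_n}^v]|=o_P((n\bar{h}_n)^{-1/2})$, which is what licenses replacing the random $(\hat{\epsilon},\hat{h}_n)$ by the deterministic $(\bar{\epsilon},\bar{h}_n)$. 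Only \emph{after} this uniform step does the paper invoke your Chebyshev argument, now legitimately, to pass from the training-set-dependent $g_{k,\bar{\epsilon},\bar{h}_n}^v$ to the deterministic $g_{k,\bar{\epsilon},\bar{h}_n}$. Conditions~\ref{it:kern} and \ref{it:barepsilon} exist precisely to support this VC/envelope argument; your proposal never uses them, which is a signal that something is missing.
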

The proof of this result is given in Appendix~\ref{app:cont}. Under mild conditions, the $\Sigma_n$ matrices will be consistently estimable via the cross-validated empirical covariance matrix $\hat{\Sigma}_n\equiv \frac{1}{n}\sum_{i=1}^n \mb{\hat{D}}_{\hat{h}_n,i} [\mb{\hat{D}}_{\hat{h}_n,i}]^T$.

Note that $\mb{\Psi}_{\hat{h}_n}(P)$ is a data adaptive parameter \citep{vanderLaan&Hubbard13}, in the sense that it is random through the bandwidth $\hat{h}_n$. One of the conditions for the theorem, listed in the next subsection, is that there exists a deterministic sequence $\bar{h}_n$ such that $\hat{h}_n/\bar{h}_n\rightarrow 1 + o_P(\bar{h}_n^{1/2})$. In fact, many selection procedures satisfy the (typically stronger) condition that $\hat{h}_n/\bar{h}_n-1=O_P(n^{-1/2})$ for a deterministic sequence $\bar{h}_n$ \citep{Chiu1991,Halletal1991,Fan&Marron1992}. Under these stronger conditions, the difference between the data adaptive smoothed parameter at bandwidth $\hat{h}_n$ and the sample size dependent, but deterministic, smoothed parameter at bandwidth $\bar{h}_n$ is of the order $n^{-1/2}$, so that one can replace the data adaptive parameter by $\mb{\Psi}_{\bar{h}_n}(P)$ in the above theorem without changing the result.

An estimator of the smoothed relative risk $\RR_h^P(s_1^\star)\equiv \Psi_{2,h}(P)/[\Psi_{3,h}(P)-\Psi_{1,h}(P)]$ is given by $\widehat{\RR}_{\hat{h}_n}(s_1^\star)\equiv \boldsymbol{\hat{\psi}}_{2,\hat{h}_n}/[\boldsymbol{\hat{\psi}}_{3,\hat{h}_n}-\boldsymbol{\hat{\psi}}_{1,\hat{h}_n}]$. Similar delta method arguments used to develop confidence intervals for discrete biomarkers can be used in the continuous setting. Then, Wald-type confidence intervals for the log smoothed relative risk can be defined by using that, under the conditions of Theorem~\ref{thm:estsmooth}, $(n\hat{h}_n)^{1/2}\hat{\sigma}_{\hat{h}_n}^{-1}[\log \widehat{\RR}_{\hat{h}_n}(s_1^\star)-\log \RR_{\hat{h}_n}^P(s_1^\star)]$ converges to a standard normal distribution, where, for $\dot{\Gamma}(\mb{x}) = \left(-x_2^{-1},-[x_1-x_3]^{-1},[x_1-x_3]^{-1}\right)$, we define
\begin{align*}
\hat{\sigma}_{\hat{h}_n}^2\equiv \frac{1}{n}\sum_{i=1}^n \left[\langle\mb{\hat{D}}_{\hat{h}_n,i},\dot{\Gamma}(\boldsymbol{\hat{\psi}}_{\hat{h}_n})\rangle-\frac{1}{n}\sum_{j=1}^n \langle\mb{\hat{D}}_{\hat{h}_n,j},\dot{\Gamma}(\boldsymbol{\hat{\psi}}_{\hat{h}_n})\rangle\right]^2.
\end{align*}

Combining the above two theorems demonstrates that there are rates for $\bar{h}_n$ such that asymptotically normal inference is possible for the unsmoothed parameter $\RR^P(s_1^\star)$. Indeed, balancing the standard error of $\boldsymbol{\hat{\psi}}_{\hat{h}_n}$, i.e. $O((n\hat{h}_n)^{-1/2})$, and its bias for estimating the unsmoothed parameter, i.e. $O(h^{\min\{r,t+1\}})$ for $r,t$ defined in Theorem~\ref{thm:bias}, shows that the bias converges more quickly than the standard error if $h=o(n^{-1/(2\min\{r,t+1\} + 1)})$. Nonetheless, selecting these rates for $\bar{h}_n$ requires undersmoothing, i.e. choosing a rate for $\bar{h}_n$ that is slower than the rate that is optimal according to a criterion such as the mean-squared error of a density estimator at $S=s_1$, conditional on $A=1$. 
Following the estimation of the causal effect of continuous treatments in \cite{Kennedyetal2016}, who himself followed the suggestion of \cite{Wasserman2006}, we focus our inference solely on the smoothed relative risk parameter, rather than on the original, unsmoothed parameter. The rate of decay on the bias from the above theorem is still interesting: it shows that this smoothed parameter is getting close to the unsmoothed parameter as the sample size grows. Nonetheless, we do not attempt to quantify the proximity of the smoothed parameter to the unsmoothed parameter in the confidence interval that we construct. Once one has adopted this perspective, one has flexibility in selecting the bandwidth $h$. For example, one could use a cross-validated mean integrated squared error criterion, where this criterion is selected for estimating the marginal density of $S$ among all treated subjects.

An alternative approach would be to use the recent work of \citep{Bibaut&vanderLaan2017}, that enables principled selection of an undersmoothed bandwidth for non-pathwise-differentiable parameters such as $\mb{\Psi}$. Consideration of the performance of this method for the estimation of $\mb{\Psi}$ is beyond the scope of this work.

\subsection{Conditions for Asymptotically Normal Estimation of the Smoothed Parameter}\label{sec:conds}
Theorem~\ref{thm:estsmooth} used the following conditions to prove the asymptotic normality of the estimator for continuous biomarkers:
\begin{enumerate}[resume=identassumptions,label=(A\arabic*),ref=A\arabic*]
	\item\label{it:L2lim} The treatment mechanism estimates satisfy $\max_v \norm{\hat{P}^v(a|W=\cdot)-P(a|W=\cdot)}=o_P(1)$. Furthermore, for $k=1,2,3$, there exists a function $\bar{q}_k$ such that $\max_v \norm{\hat{q}_{k,\hat{h}_n}^v-\bar{q}_k}=o_P(1)$.
	\item\label{it:band} There exists a deterministic positive sequence $\{\bar{h}_n\}_{n=1}^\infty$ such that $\hat{h}_n/\bar{h}_n\rightarrow 1 + o_P(\bar{h}_n^{1/2})$ and $n\bar{h}_n$ diverges to infinity.
	\item\label{it:rem} $\max_v \norm{\hat{q}_{k,h}^v - q_{k,h}}\norm{\hat{P}^v(a_k|W=\cdot)-P(a_k|W=\cdot)}=o_P(n^{-1/2}\hat{h}_n^{-1/2})$.
	\item\label{it:kern} The kernel $K$ can be evaluated using a composition of a finite number of arithmetic operations ($+$, $-$, $\div$, $\times$), indicator functions ($x\mapsto \Ind\{x>c\}$ for a constant $c$), and exponential functions ($x\mapsto e^x$). Furthermore, the kernel $K$ is bounded and uniformly continuous.
	\item\label{it:estsbounded} There exists a constant $c>0$ such that $P\{c^{-1}<\min_{k,v} \hat{q}_{k,h}^v(W)\le\max_{k,v} \hat{q}_{k,h}^v(W)< c\}=1-o_P(1)$ and a constant $\delta>0$ such that $P\{\max_{k,v,a} 1/\hat{P}_k^v(a|W)< \delta^{-1}\}=o_P(1)$.
	\item\label{it:barepsilon} For each $k\in\{1,2,3\}$, there exists a fixed $\bar{\epsilon}_k\in\mathbb{R}$ such that $\hat{\epsilon}_{k,n}\rightarrow\bar{\epsilon}_k$ in probability as $n\rightarrow\infty$, where here (and here only) we make the dependence of $\hat{\epsilon}_k$ on $n$ explicit in the notation by denoting this quantity by $\hat{\epsilon}_{k,n}$.
	\item\label{it:covmat} There exists some fixed $\delta>0$ such that, with probability approaching one, all of the eigenvalues of $\Sigma_n$ are bounded below by $\delta$.
\end{enumerate}
We now discuss these conditions.

\noindent\textbf{Condition~\ref{it:L2lim}.} This condition requires mean-square consistency of the needed nuisance parameter estimates, where we require this consistency for the estimators within each training fold. If the same estimators are used within each training fold, then this is equivalent to requiring mean-square consistency of the estimation of the needed nuisance parameters on the full sample. This condition does not require any rate for this convergence, and so is a minor assumption relative to \ref{it:rem}, though we note that \ref{it:L2lim} is not necessarily implied by \ref{it:rem} (e.g., if the treatment mechanism is known and one uses this information in estimation). Note that here we have assumed that the treatment mechanism estimates converge to the true treatment mechanism in mean-square, whereas we have only required that $\hat{q}_{k,\hat{h}_n}^v$ has some limit. The condition on the treatment mechanism seems minor given that the treatment mechanism is known in most settings where crossover designs would be employed.

\noindent\textbf{Condition~\ref{it:band}.} This condition requires that the bandwidth selection procedure that is employed will eventually return a (nearly) deterministic result. The procedures of \citep{Chiu1991,Halletal1991} yield data-driven bandwidth selections $\hat{h}_n$ that satisfy $\hat{h}_n/h_n^\diamond = 1+O_P(n^{-1/2})$, where here $h_n^\diamond$ is the optimal bandwidth in terms of mean integrated squared error. In fact, \cite{Fan&Marron1992} demonstrates that the $O_P(n^{-1/2})$ term quantifying the multiplicative deviation of these bandwidth selection procedures from the optimal bandwidth is optimal even in terms of the leading constant. Therefore, if one selects $\hat{h}_n$ to optimize an estimate of the density of $S$ among treated subjects, we expect that the selected bandwidth will eventually behave as a deterministic sequence and therefore that \ref{it:band} will be satisfied.

\noindent\textbf{Condition~\ref{it:rem}.} This condition requires that product of the estimates of the treatment mechanism and the smoothed densities $q_{k,h}$ converge to zero faster than does the standard error of our estimator. If the treatment mechanism is known, as is the case in a randomized trial, then one can ensure that this product is exactly zero at every sample size. Otherwise, the plausibility of this condition will typically rely on the dimensionality of $W$. As a reference point, we note  that, if a parametric model could be correctly specified for an unknown treatment mechanism, then this condition would also automatically be satisfied if $\hat{h}_n\rightarrow 0$ in probability, since in this case the product from this condition would be $O_P(n^{-1/2})$.

\noindent\textbf{Condition~\ref{it:kern}.} This condition does not rely on the data or the data generating distribution and therefore can be verified in applications. We use this condition to ensure certain empirical process and envelope conditions that we use in our proof are satisfied. This condition is satisfied by most kernels used in practice. For example, it is satisfied by $K(x)=\Ind\{-1/2\le x\le 1/2\}$ (uniform kernel), $K(x) = \exp(-x^2/2)$ (Gaussian kernel), and $K(x) = \frac{1}{2}(3-x^2)\exp(-x^2/2)$ (fourth-order Gaussian kernel).

\noindent\textbf{Condition~\ref{it:estsbounded}.} This assumption ensures that our estimates both of the density $q_{k,h}$ and of the treatment mechanism do not fall too close to zero or one, and can be enforced in practice via truncation. For the truncation not to damage the chance of the consistency of these estimates in \ref{it:rem}, Condition~\ref{it:estsbounded} requires that the true treatment mechanism does not fall too close to zero or one. This will be plausible for the treatment mechanism in a randomized trial in which all subjects have a non-negligible probability of receiving or not receiving treatment. The assumption on the density $q_{k,h}$ requires that, within each stratum of covariates, any given realization of the biomarker is not very rare or very likely. Furthermore, the assumption on $q_{k,h}$ requires that the probability of having $Y=1$ within baseline covariate strata of subjects not receiving treatment at baseline is never too common (this is to be expected for rare events), and also that the event probability is bounded away from zero among treated subjects within each stratum of baseline covariates.

\noindent\textbf{Condition~\ref{it:barepsilon}.} The existence of a limit $\bar{\epsilon}_k$ for $\hat{\epsilon}_k$ can be ensured under mild conditions using arguments for Z-estimators \citep{vanderVaartWellner1996}. We omit these arguments here for brevity.

\noindent\textbf{Condition~\ref{it:covmat}.} This condition aims to ensure that the limit is non-degenerate. Its plausibility can be checked in practice by checking the size of the eigenvalues of the estimator $\hat{\Sigma}_n$ of $\Sigma_n$, given by $\hat{\Sigma}_n\equiv \frac{1}{n}\sum_{i=1}^n \mb{\hat{D}}_{\hat{h}_n}(O_i) [\mb{\hat{D}}_{\hat{h}_n}(O_i)]^T$, and ensuring that the smallest eigenvalue is not close to zero.

\section{Simulation}\label{sec:sim}
We conduct simulations similar to the settings in \citet{Gilbert&Hudgens2008}, though with continuous rather than discrete baseline covariates and biomarker. Here we focus on a univariate post-treatment biomarker because such biomarkers are often of interest in vaccine studies \citep[see][for examples]{Corey2015}. We consider sample size 5000 with equal number of people in treatment $A=1$ and control $A=0$. We generate $(W,S_1)$ from bivariate normal distribution with mean $(0.41, 0.41)^T$ and covariance matrix 
\[
\begin{pmatrix}
0.55^2 & 0.55^2 \cdot0.5 \\
0.55^2\cdot0.5 & 0.55^2 \\
\end{pmatrix}.
\]
For $a=0,1$, we generate the potential outcome from the logistic model 
\[
P^F(Y_a=1|W,S_1) = \text{expit}(\beta_0+\beta_1a+\beta_2W+\beta_3S_1+\beta_4aS_1)
\]
where $\text{expit}(x) \equiv \exp(x)/(1+\exp(x))$ and $\beta_0=1.2, \beta_1=-0.6, \beta_2=-0.5, \beta_3=-0.1, \beta_4=-0.9$, such that the average disease rate is $9.5\%$ among subjects receiving $A=1$ and is $18.8\%$ among subjects receiving $A=0$. We consider a full sampling setting where all treated subjects ($A=1$) and all untreated controls ($A=0,Y=0$) have the biomarker measured. We consider Gaussian kernel $K_h(x) = \frac{1}{\sqrt{2\pi}h}\exp(-\frac{x^2}{2h^2})$ to smooth $\indicator{S_1=s_1^*}$. We use \verb=R= package \verb=SuperLearner= \citep{SuperLearnerPackage} to estimate the kernel-smoothed $\mb{\Psi}_h(P)$ in equation (\ref{eq:smoothedparam}). We consider GLM, GLM with interaction, stepwise regression, neural network and sample mean in our SuperLearner library, whose corresponding wrapper functions in the \texttt{SuperLearner} package are respectively given by \texttt{SL.glm}, \texttt{SL.glm.interaction}, \texttt{SL.step}, \texttt{SL.nnet}, \texttt{SL.mean}. All of these functions were run at their default settings in version~2.0-22 of the \texttt{SuperLearner} package. 

We calculate two versions of the true parameters: one without smoothing and one with smoothing. The true parameter without smoothing can be derived from the following identity 
\begin{align*}
\Psi_1(P;s_1^*) & = P^F(S_1=s_1^*)  \\
\Psi_2(P;s_1^*) & = \int P^F(Y_1=1|S_1=s_1^*, W=w)P^F(S_1=s_1^*, W=w) dw \\
\Psi_3(P;s_1^*) & = \int P^F(Y_0=0|S_1=s_1^*, W=w)P^F(S_1=s_1^*, W=w) dw 
\end{align*}
The target parameter is log relative risk $\Psi_{h} = \log(\frac{\Psi_{2,h}}{\Psi_{1,h} - \Psi_{3,h}})$. The smoothed version of the true parameter is calculated by \eqref{eq:smoothedparam}.  
We calculate the $95\%$ Wald-type confidence interval and its coverage over the unsmoothed true parameter and smoothed true parameter across 1000 simulations with different kernel bandwidth $h$. Figure \ref{fig:cov} shows the confidence interval coverage for a fixed $s_1^* = 0.6$ with different choices of $h$. The coverage probability for the unsmoothed vaccine efficacy parameter drops below the nominal level when $h$ is small or large.  The coverage probability for the smoothed parameter drops below the nominal level when $h$ is small but does not drop below nominal level when $h$ is large. The results are similar for other values of $s_1^*$. 

\begin{figure}
  \centering
  \includegraphics[width=0.7\textwidth]{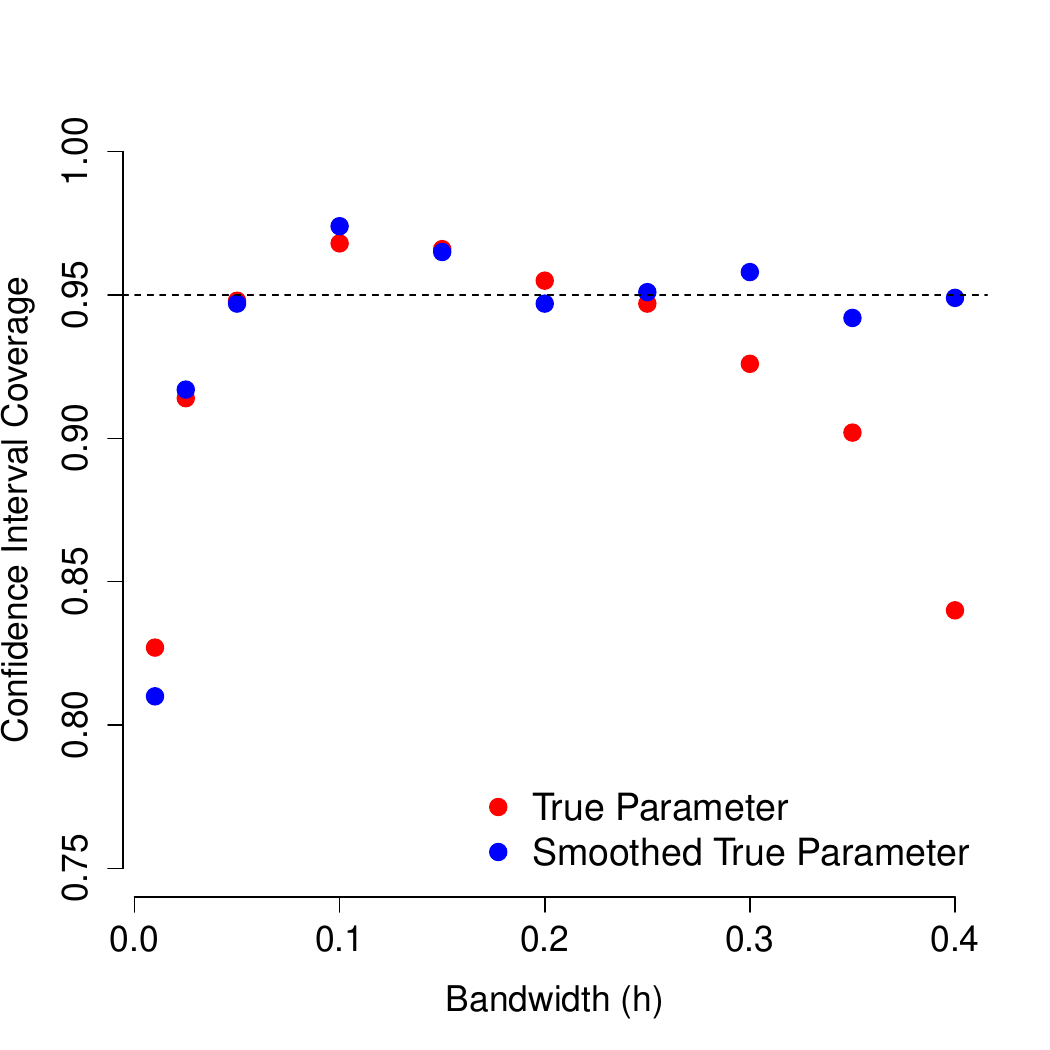}
  \caption{The coverage probability for the unsmoothed vaccine efficacy parameter drops below the nominal level when $h$ is small or large.  The coverage probability for the smoothed vaccine efficacy parameter drops below the nominal level when $h$ is small but does not drop below nominal level when $h$ is large.}
  \label{fig:cov}
\end{figure}

The primary results that we present in the main text will be based on a bandwidth of 0.2. We selected this bandwidth by first noting the three places that the bandwidth is used in our procedure: first, to estimate the density of $S|A=1,W$ that appears in the definition of $\Psi_1$; next, to estimate the density of $S|A=1,Y=1,W$ that appears in the definition of $\Psi_2$; and, finally, to estimate the density of $S^c|A=0,Y=0,W$ that appears in the definition of $\Psi_3$. To simplify selection of the bandwidth, we considered the estimation of the corresponding densities when the covariate $W$ was not conditioned on, namely the estimation of the densities of $(S|A=1)$, $(S|A=1,Y=1)$, and $(S^c|A=0,Y=0)$. Since $Y$ is a rare outcome, the sparsest stratum in these conditioning statements will occur in the second listed density, namely the density of $S|A=1,Y=1$. Consequently, we expect that the bandwidth selection should be driven by the estimation of $\Psi_2$. To select the bandwidth for our simulations, we generated many data sets of size 5000, estimated the density of $S|A=1,Y=1$, and evaluated the default bandwidth selected by the \texttt{density} function in \texttt{R}; this default bandwidth is selected based on Eq.~3.31 of \cite{Silverman86}. We found that values of approximately 0.15 were typically selected. Because the densities used in the definitions of $\Psi_1$, $\Psi_2$, and $\Psi_3$ are defined conditionally on $W$, we inflated this selected bandwidth to 0.2 to define the primary bandwidth considered in the main text. In Appendix~\ref{app:sim} we also present results at bandwidths of 0.1, 0.3, and 0.4.

Table \ref{tab:cov} shows the bias, standard error, and coverage probability of the unsmoothed and smoothed truth  for different $s_1^*$ with bandwidth chosen to be 0.2. The proposed TMLE estimator has low bias and is asymptotically unbiased. As expected, the coverage probability for the smoothed vaccine efficacy parameter is better than the coverage probability for the unsmoothed vaccine efficacy parameter. Our additional simulation results for bandwidths 0.1, 0.3, and 0.4 are given in Appendix~\ref{app:sim}. These results continue to show appropriate coverage of the smoothed parameter, though, as expected, when the bandwidth is large we see poorer coverage of the unsmoothed parameter.

\begin{table}[ht]
\centering
\begin{tabular}{lrrrrrrrrrrr}
  \hline
$s_1^*$ & 0 & 0.1 & 0.2 & 0.3 & 0.4 & 0.5 & 0.6 & 0.7 & 0.8 & 0.9  \\ 
  \hline
  Bias, Truth & -0.02 & -0.01 & -0.00 & 0.01 & 0.02 & 0.03 & 0.04 & 0.05 & 0.07 & 0.09 \\ 
  Bias, Smoothed & -0.01 & -0.02 & -0.02 & -0.01 & -0.01 & -0.01 & -0.01 & -0.00 & 0.01 & 0.03 \\ 
  Standard Error & 0.18 & 0.17 & 0.17 & 0.17 & 0.17 & 0.18 & 0.19 & 0.21 & 0.23 & 0.27 \\ 
  Coverage, Truth & 0.95 & 0.95 & 0.95 & 0.95 & 0.94 & 0.95 & 0.95 & 0.96 & 0.96 & 0.97 \\ 
  Coverage, Smoothed & 0.96 & 0.95 & 0.95 & 0.94 & 0.93 & 0.94 & 0.95 & 0.95 & 0.96 & 0.96 \\ 
   \hline
\end{tabular}
\caption{Bias, standard error, and coverage probability of the log relative risk point estimator and confidence intervals for different $s_1^*$ values with bandwidth chosen to be $0.2$.}
\label{tab:cov}
\end{table}

\section{Discussion}\label{sec:disc}
We have presented a targeted minimum loss-based estimator (TMLE) for estimating contrasts between the principally stratified means of absorbent endpoints in a crossover design, where we recall that an absorbent endpoint is an endpoints whose occurrence renders any subsequent measures of the biomarker of interest scientifically uninteresting. For most biomarkers, death is an absorbent endpoint. In HIV vaccine trials, HIV infection is an absorbent endpoint when the biomarker of interest is the immune response. We have established sufficient conditions for the nonparametric identifiability of treatment-specific principally stratified means. Our identifiability conditions do not require that the baseline covariates be discrete. We also established a necessary and sufficient condition for the falsifiability of our assumption that relates the conditional distribution of the biomarker in the population that was treated at baseline to the conditional distribution of the biomarker in the population that was treated at the crossover stage. An implication of this result is that the crossover assumption can be tested: if the test fails to reject, then there is no evidence in the data that the crossover assumption is falsifiable. If the test rejects, then there is evidence in the data that the crossover assumption is false.

Our proposed method does not handle right-censoring on the absorbent endpoint. When dropout is rare, this will at most induce a small amount of bias in an estimate of the principally stratified mean outcomes. When dropout is more common but rare enough that a reasonable proportion of subjects have follow-up to the end of the study, the methods in this work can be readily extended by incorporating inverse probability of censoring weights to the proposed estimators. When few subjects have follow-up until the end of the study, further methodological development is needed to develop low-variance estimators that account for dropout.

When the biomarker is continuous, our proposed method relies on a choice of bandwidth. Though our theoretical results provide guarantees for a wide range of sequences of selected bandwidths, we have given only limited discussion to how the bandwidth should be selected in practice. We did present one heuristic strategy in our Section~\ref{sec:sim}, which involves using existing bandwidth selection approaches for estimating the conditional density of $S$ given that $A=1$ and $Y=1$ -- this density is related to a density that appears in the definition of the parameter $\Psi_2$. We leave the theoretical analysis of this and other bandwidth selection strategies to future work.

Our proposed estimators are especially advantageous in settings where (i) the baseline covariates are predictive of the post-treatment biomarker and (ii) the relationship between these baseline covariates and the biomarker is complex. The predictiveness of the baseline covariates improves the plausibility of the identifiability assumption linking the biomarker measured in treated individuals to the biomarker measured in untreated individuals following their crossover to treatment. In these settings, obtaining statistical inference for the contrast of interest using existing methods relies on either ignoring the baseline covariates, discretizing the baseline covariates, or assuming a parametric relationship between the biomarker and these baseline covariates. Our proposed TMLE is especially compelling in these settings because it enables the incorporation of flexible estimation techniques for the conditional distribution of the biomarker given the baseline covariates, while still admitting statistical inference.

{\singlespacing\section*{Acknowledgements}
This work was partially supported by the National Institute of Allergy and
Infectious Disease at the National Institutes of Health under award number UM1 AI068635. The content is solely the responsibility of
the authors and does not necessarily represent the official views of the National
Institutes of Health. Alex Luedtke gratefully acknowledges the support of the New Development Fund
  of the Fred Hutchinson Cancer  Research Center. The authors thank Peter Gilbert for helpful comments that greatly improved the quality of the manuscript.}

{\singlespacing
\bibliographystyle{Chicago}
 \bibliography{persrule}
}
\appendix
\setcounter{equation}{0}
\renewcommand{\theequation}{A.\arabic{equation}}
\setcounter{theorem}{0}
\renewcommand{\thetheorem}{A.\arabic{theorem}}
\renewcommand{\thecorollary}{A.\arabic{theorem}}
\renewcommand{\thelemma}{A.\arabic{theorem}}
\renewcommand{\theproposition}{A.\arabic{theorem}}
\renewcommand{\theconjecture}{A.\arabic{theorem}}

\section*{Appendix}
\section{Proofs of Results from Main Text}\label{app:proofs}
\subsection{Identifiability}\label{app:ident}
\subsubsection{Identifiability under \ref{it:sutvacons}, \ref{it:ignorable}, and \ref{it:crossover}}
\begin{proof}[Proof of Theorem~\ref{thm:identifiability}]
Proof of \ref{it:margS1}. Using that $P^F(S_1=s_1^\star) = \E^F[P^F(S_1=s_1^\star|W)]$, ignorability shows that $P^F(S_1=s_1^\star|A=1,W)$, and consistency then shows that the right-hand side is equal to $\E[P(S=s_1^\star|A=1,W)]$.\\[0.5em]
Proof of \ref{it:Y1andS1}. Using that $P^F(Y_1=1,S_1=s_1^\star) = \E^F\left[P^F(Y_1=1,S_1=s_1^\star|W)\right]$, ignorability shows that $P^F(Y_1=1,S_1=s_1^\star|W)=P^F(Y_1=1,S_1=s_1^\star|A=1,W)$, and consistency shows that the right-hand side is equal to $\E\left[P(Y=1,S=s_1^\star|A=1,W)\right]$.\\[0.5em]
Proof of \ref{it:notY0andS1}. Note that
\begin{align*}
&P^F\left(Y_0=0,S_1=s_1^\star\right) = \E^F\left[\E^F\left[\Ind_{\left\{Y_0=0,S_1=s_1^\star\right\}}|W\right]\right] \tag{Tower Property} \\
&= \E^F\left[P^F\left(S_1=s_1^\star\middle|Y_0=0,W\right)P^F(Y_0=0|W)\right] \tag{Simplification} \\
&= \E^F\left[P^F\left(S_0^c=s_1^\star\middle|Y_0=0,W\right)P^F(Y_0=0|W)\right] \tag{\ref{it:crossover}} \\
&= \E^F\left[P^F\left(Y_0=0,S_0^c=s_1^\star\middle|W\right)\right] \tag{Simplification} \\
&= \E^F\left[P^F\left(Y_0=0,S_0^c=s_1^\star\middle|A=0,W\right)\right] \tag{\ref{it:ignorable}} \\
&= \E\left[P\left(Y=0,S^c=s_1^\star\middle|A=0,W\right)\right]. \tag{\ref{it:sutvacons}}
\end{align*}
\end{proof}

\subsubsection{Non-Falsifiability of \ref{it:crossover}}\label{app:falsifiability}
We now show that \ref{it:crossover} is not falsifiable from the observed data if $\Psi_4(P)=0$. We do this by constructing a distribution that is consistent with the observed data (in the sense of \ref{it:sutvacons}), satisfies the ignorability assumption \ref{it:ignorable}, and satisfies the crossover assumption \ref{it:crossover}. This shows that, if $\Psi_4(P)=0$, then we cannot falsify \ref{it:crossover} from the observed data, even as the sample size approaches infinity.
\begin{theorem}\label{thm:falsifiability}
If $\Psi_4(P)=0$, then condition \ref{it:crossover} is not falsifiable from the observed data.
\end{theorem}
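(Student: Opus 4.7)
The plan is to establish non-falsifiability by exhibiting a counterfactual distribution $P^F$ on $(W, A, S_1, Y_0, Y_1, S_0^c)$ that (i) satisfies \ref{it:sutvacons}, \ref{it:ignorable}, and \ref{it:crossover}, and (ii) whose induced observed-data distribution equals $P$. The existence of such a $P^F$ means that even infinite observed data from $P$ are compatible with all three identifiability conditions, so \ref{it:crossover} cannot be rejected.

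I would build $P^F$ directly from the conditionals of $P$. Set the $P^F$-marginal of $W$ equal to the $P$-marginal of $W$, set $P^F(A=1\mid W) = P(A=1 \mid W)$, and declare $(S_1, Y_0, Y_1, S_0^c) \independent A \mid W$ under $P^F$, which gives \ref{it:ignorable} at once. Conditional on $W = w$, take $Y_0 \sim \Bern(P(Y=1 \mid A=0, w))$, take $Y_1$ given $(S_1, W=w)$ to be $\Bern(P(Y=1 \mid S=S_1, A=1, w))$ independently of $(Y_0, S_0^c)$, and define $S_0^c \equiv 0$ when $Y_0 = 1$ while giving $S_0^c \mid Y_0 = 0, W=w$ the conditional law $P(S^c = \cdot \mid A=0, Y=0, w)$. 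Finally, declare the joint law of $(S_1, Y_0)$ given $W=w$ via
\begin{align*}
S_1 \mid Y_0 = 0, W=w &\sim P(S^c = \cdot \mid A=0, Y=0, w), \\
S_1 \mid Y_0 = 1, W=w &\sim \frac{P(S=\cdot \mid A=1, w) - P(S^c=\cdot, Y=0 \mid A=0, w)}{P(Y=1 \mid A=0, w)}.
\end{align*}
The first line, together with the definition of $S_0^c \mid Y_0 = 0, W=w$, delivers \ref{it:crossover}; the two lines together recover the required marginal $P^F(S_1 = \cdot \mid W=w) = P(S = \cdot \mid A=1, w)$ by direct mixture calculation.

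Verifying \ref{it:sutvacons} reduces, after conditioning on $(A, W)$, to matching two conditional laws. Under $A=1$, the $P^F$-law of $(\Ind_{A=1} S_1, Y_1, \Ind_{A=0} S_0^c) \mid W=w$ is the law of $(S_1, Y_1, 0) \mid W=w$, and ignorability plus the induced marginals of $S_1$ and $Y_1$ makes this agree with $(S, Y, S^c) \mid A=1, W=w$ under $P$. Under $A=0$, the $P^F$-law of $(\Ind_{A=1}S_1, Y_0, \Ind_{A=0}S_0^c) \mid W=w$ is the law of $(0, Y_0, S_0^c) \mid W=w$, which agrees with $(S, Y, S^c) \mid A=0, W=w$ by the definitions of $Y_0$ and $S_0^c$.

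The principal obstacle is showing that the displayed conditional law of $S_1 \mid Y_0 = 1, W=w$ is a valid probability distribution, i.e., that its density (or point mass) is non-negative almost everywhere. This is precisely where the hypothesis $\Psi_4(P) = 0$ is invoked: the non-negative integrand $\bigl(P(S^c = s_1, Y=0 \mid A=0, w) - P(S=s_1 \mid A=1, w)\bigr)^+$ integrates to zero against $dP(S=s_1 \mid A=1, w)\, dP(w)$, so the inequality $P(S^c=s_1, Y=0 \mid A=0, w) \le P(S=s_1 \mid A=1, w)$ holds on the support of that dominating measure. At values of $s_1$ outside the support of $P(S=\cdot \mid A=1, w)$ I would modify the $Y_0=1$ conditional on a $P^F$-null set (which does not disturb the mixture), and at values of $w$ with $P(Y=1 \mid A=0, w) = 0$ the conditional need not be defined because $P^F(Y_0 = 1 \mid W=w) = 0$. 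Modulo these measure-theoretic adjustments, the construction yields a valid $P^F$, proving that $P$ cannot falsify \ref{it:crossover}.
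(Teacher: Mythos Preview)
Your proposal is correct and takes essentially the same approach as the paper: both construct a counterfactual distribution $P^F$ by setting $S_1\mid Y_0=0,W=w$ and $S_0^c\mid Y_0=0,W=w$ each equal to $P(S^c=\cdot\mid A=0,Y=0,w)$, defining $S_1\mid Y_0=1,W=w$ via the normalized difference $P(S=\cdot\mid A=1,w)-P(S^c=\cdot,Y=0\mid A=0,w)$, and using $\Psi_4(P)=0$ to guarantee non-negativity of that difference. Your treatment is in fact slightly more careful than the paper's in two respects: you set $S_0^c\equiv 0$ on $\{Y_0=1\}$ (which is needed for \ref{it:sutvacons} given the convention that $S^c=0$ for untreated cases), and you explicitly address the null-set issues when $P(Y=1\mid A=0,w)=0$ or when $s_1$ lies outside the support of $P(S=\cdot\mid A=1,w)$.
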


\begin{proof}
Suppose that $\Psi_4(P)=0$. We establish that \ref{it:crossover} is not falsifiable by showing that there exists a distribution $P_1^F$ such that, if $P^F=P_1^F$, then \ref{it:sutvacons}, \ref{it:ignorable}, and \ref{it:crossover} hold. Consider the distribution $P_1^F$, defined by its conditional densities (with respect to appropriate dominating measures):
\begin{align*}
& p_1^F(y_1 | s_0^c, y_0, s_1, a, w) = p(y|A=1,s_1,w), \\
&p_1^F(s_0^c , | s_1 , y_0 , a, w) = p(S^c = s_0^c|Y=0,A=0,w) \\
& p_1^F(s_1, y_0 | a, w) = \begin{cases}
p(S^c=s_1,Y=0|A=0,w),&\mbox{ if } y_0=0 \\
p(S=s_1|A=1,w)-p(S^c=s_1,Y=0|A=0,w),&\mbox{ if } y_0=1, \\
\end{cases} \\
&p_1^F(a,w) = p(a,w).
\end{align*}
Given that $P$ is a probability measure, clearly, the first, second, and fourth conditional densities are everywhere nonnegative and sum to one. The third conditional density is nonnegative because $\Psi_4(P)=0$ and sums to one since
\begin{align*}
\sum_{s_1} \sum_{y_0} p_1^F(s_1, y_0 | a, w) = \sum_{s_1} p(S=s_1|A=1,w) = 1.
\end{align*}
Condition \ref{it:ignorable} is easily established for the above since $a$ does not appear on the right-hand side of any of the density definitions except for the definition of $p_1^F(a,w)$. To establish \ref{it:sutvacons}, the only challenge is in showing that $p_1^F(s_1 | w) = p(S=s_1 | A=1, w)$ and $p_1^F(y_0 | w) = p(Y=y_0 | A=0, w)$, but these results both follow by respectively summing $p_1^F(s_1, y_0 | a, w)$ over $y_0$ and $s_1$. Furthermore, note that \ref{it:crossover} holds if $P^F=P_1^F$, since, for all $s_1$, the fact that $S_1$ is independent of $A,S_0^c$ given $W$ shows that $p_1^F(S_1=s_1 | Y_0=0, w)$ equals $p(S^c=s_1|Y=0,A=0,W)$, and \ref{it:sutvacons} and \ref{it:ignorable} show that $p(S^c=s_1|Y=0,A=0,W) = p_1^F(S_0^c=s_1|Y_0=0,w)$.
\end{proof}

\subsection{Estimation when Biomarker is Discrete}\label{app:est}
\begin{proof}[Proof of Theorem~\ref{thm:eif}]
It is well known \citep{vdL02} that, for an arbitrary bounded function $f : \mathcal{O}\rightarrow\mathbb{R}$, $a\in\{0,1\}$, and parameter $\Phi^{f,a}$ defined by $\Phi^{f,a}(P')\equiv \E_{P'}[\E_{P'}[ f(O)|A=a,W ]]$, the gradient at $P'$ is
\begin{align*}
o\mapsto \frac{\indicator{A=a}}{P'(a|w)}\left\{f(o) - \E_{P'}[f(o)|A,W] \right\} + \E_{P'}[f(o)|A=a,W] -  \Phi^{f,a}(P').
\end{align*}
The proof concludes by noting that $\Psi_k(P')=\Phi^{f_k,a_k}(P')$ for all $k=1,2,3$.
\end{proof}

\section{Estimation when Principal Biomarker is Continuous}\label{app:cont}
We first present several lemmas used to prove Theorem~\ref{thm:estsmooth} from the main text, which establishes an asymptotically normal distribution for the estimate of the smoothed parameter. The proof of these lemmas is deferred until after that of Theorem~\ref{thm:estsmooth}. We then prove Theorem~\ref{thm:bias} from the main text, which shows that, under some conditions, the smoothed parameter converges to the true, unsmoothed parameter as the bandwidth shrinks to zero.

We now present the lemmas used in the above proof of Theorem~\ref{thm:estsmooth}. The first lemma uses fixed functions $w\mapsto \check{q}_k(w)$ and $w\mapsto \check{\pi}_k(w)$. When we invoke this theorem, we will do so at $\hat{q}_{k,h}^{v}$ and $\hat{P}(a_k|W=\cdot)$, where we will do to this conditionally on training sample $v$ so that we can treat these functions as known.
\begin{lemma}\label{lem:vc}
Fix $k$ and functions $w\mapsto \check{q}_k(w)$ and $w\mapsto \check{\pi}_k(w)$. The class $\mathcal{G}_k(\check{q}_k,\check{\pi}_k)$ defined in \eqref{eq:Gdef} is VC-subgraph \citep{vanderVaartWellner1996}. Furthermore, the VC dimension of this class does not depend on $\check{q}_k,\check{\pi}_k$.
\end{lemma}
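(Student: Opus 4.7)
The plan is to reduce the claim to a standard VC-subgraph preservation result for classes of functions defined by a fixed finite formula involving arithmetic operations, halfspace indicators, and exponentials, parameterized by a finite number of real numbers. Results of this type — going back to Pollard and collected as Lemmas 2.6.15 and 2.6.18 in van der Vaart and Wellner (1996) — yield VC-subgraph classes whose VC dimension depends only on the number of operations in the formula and the number of free parameters.

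First, I would unfold the definition of a generic element of $\mathcal{G}_k(\check{q}_k,\check{\pi}_k)$ as a function of $o=(w,a,s,y,s^c)$ with parameters $(h,\epsilon)\in(0,\infty)\times\mathbb{R}$, following the form of $g_{k,\bar{\epsilon},\bar{h}_n}$ in \eqref{eq:glimdef} but with $\bar{q}_k$ and $P(a_k|w)$ replaced by the fixed $\check{q}_k$ and $\check{\pi}_k$. The key observation is that $\check{q}_k$ and $\check{\pi}_k$ enter only as fixed preprocessing of $w$, so I would absorb them into a fixed feature map $\phi(o)\equiv(\log\check{q}_k(w),1/\check{\pi}_k(w),\indicator{a=a_k},s,y,s^c)$. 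In this representation, an arbitrary element of $\mathcal{G}_k(\check{q}_k,\check{\pi}_k)$ is then a fixed formula in the finite-dimensional vector $(\phi(o),h,\epsilon)$.

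Next, Condition \ref{it:kern} guarantees that $K$, and hence $K_h(s-s_1^\star)=h^{-1}K((s-s_1^\star)/h)$, is built from $(s,h)$ by a finite composition of $\{+,-,\times,\div\}$, indicators of affine inequalities, and $\exp$. The same is therefore true of $f_{k,h}(o)$ and, after one more layer using the $\exp\{\log\check{q}_k(w)+\epsilon\}$ factors and the division by $\check{\pi}_k(w)$, of $g_{k,\epsilon,h}(o)$ itself. I would then apply the preservation lemmas inductively along this finite formula: halfspaces in $(\phi(o),h,\epsilon)$ have bounded VC dimension; sums, products, and quotients (with fixed nonzero features) of VC-subgraph classes are VC-subgraph with bounds on the combined VC dimension depending only on the inputs; and composition with a fixed monotone function such as $\exp$ preserves the VC-subgraph property. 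Carrying this induction through the finite formula for $g_{k,\epsilon,h}$ yields that $\mathcal{G}_k(\check{q}_k,\check{\pi}_k)$ is VC-subgraph.

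For the second claim, I would simply observe that at every stage of the induction the bound on the VC dimension depends only on the structure of the kernel formula fixed by Condition \ref{it:kern} and on the number of free parameters, namely $(h,\epsilon)$; the functions $\check{q}_k$ and $\check{\pi}_k$ appear only through the fixed feature map $\phi$ and do not add free parameters to the class. Hence the final VC-dimension bound is independent of $\check{q}_k$ and $\check{\pi}_k$. The main obstacle in executing this plan is not any deep probabilistic argument but rather the careful bookkeeping of how many arithmetic, indicator, and exponential operations are needed to express $g_{k,\epsilon,h}$ in terms of $(\phi(o),h,\epsilon)$, and verifying that division by $\check{\pi}_k(w)$ (a fixed feature) does not introduce any new parameterization; once this bookkeeping is done, the stated preservation lemmas deliver both conclusions of the lemma.
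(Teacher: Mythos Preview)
Your high-level plan is correct and matches the paper's: both arguments exploit that, after fixing $\check q_k$ and $\check\pi_k$ as preprocessed features, each $g_{k,\epsilon,h}$ is computed from the input and the two real parameters $(\epsilon,h)$ by a fixed finite formula using only $\{+,-,\times,\div\}$, halfspace indicators, and $\exp$, so the VC dimension is bounded by a quantity depending only on the length of that formula and the number of parameters, not on $\check q_k,\check\pi_k$.

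The difference is in the tool invoked. The paper applies a single ``program-length'' result, Theorem~8.14 of Anthony and Bartlett (1999), which directly bounds the VC dimension of any class whose members are evaluated by at most $t$ operations from the allowed list on $m$ real parameters; this yields the uniform bound $m(t+2)(t+2+19\log_2[9(t+2)])$ in one stroke. Your route instead tries to build the bound inductively via the preservation lemmas 2.6.15 and 2.6.18 of van der Vaart and Wellner. Those lemmas cover translation, multiplication by a \emph{fixed} function, and composition with a fixed monotone map, but they do \emph{not} assert that sums, products, or quotients of two VC-subgraph classes that both vary with the parameters are again VC-subgraph; in general that fails (one only gets polynomial covering numbers). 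Since the kernel piece $f_{k,h}$ and the factor $\exp\{\log\check q_k(w)+\epsilon\}$ both depend on the free parameters, your inductive step ``sums, products, and quotients \dots of VC-subgraph classes are VC-subgraph'' is precisely where the vdV--Wellner lemmas stop short. The Anthony--Bartlett theorem (or the closely related Pollard/Goldberg--Jerrum results) is tailored to exactly this situation and is what closes the gap; with that substitution your argument goes through and is essentially the paper's.
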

The next lemma is useful for establishing that our estimator ensures that the empirical mean of the smoothed influence function is zero.
\begin{lemma}\label{lem:zest}
Fix $k\in\{1,2,3\}$. Define
\begin{align*}
Z_{k,\hat{h}_n}(\epsilon)&\equiv \frac{1}{n}\sum_{i=1}^n \frac{\Ind_{A_i=a_k}}{\hat{P}(A_i|W_i)} \left[f_{k,\hat{h}_n}(O_i) - \exp\left\{\log \hat{q}_{k,\hat{h}_n}^{v(i)}(W_i) + \epsilon\right\}\right].
\end{align*}
If \ref{it:estsbounded}, then $Z_{k,\hat{h}_n}(\hat{\epsilon}_k) = 0$ with probability approaching one.
\end{lemma}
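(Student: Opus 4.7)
The key observation is that $Z_{k,\hat{h}_n}(\epsilon)$ depends on $\epsilon$ only through $e^\epsilon$: writing
\begin{align*}
M_1 &\equiv \frac{1}{n}\sum_{i=1}^n \frac{\Ind_{A_i=a_k}}{\hat{P}(A_i|W_i)} f_{k,\hat{h}_n}(O_i), &
M_2 &\equiv \frac{1}{n}\sum_{i=1}^n \frac{\Ind_{A_i=a_k}}{\hat{P}(A_i|W_i)} \hat{q}_{k,\hat{h}_n}^{v(i)}(W_i),
\end{align*}
one has $Z_{k,\hat{h}_n}(\epsilon) = M_1 - e^{\epsilon} M_2$, so the map $\epsilon\mapsto Z_{k,\hat{h}_n}(\epsilon)^2$ attains the value zero somewhere in $\mathbb{R}$ if and only if $M_1/M_2>0$, in which case the unique zero is $\epsilon^\star = \log(M_1/M_2)$. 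Since $\hat{\epsilon}_k$ is by definition a minimizer of this nonnegative map, it therefore suffices to show that $M_1$ and $M_2$ are both strictly positive with probability approaching one; the conclusion then follows because $\hat{\epsilon}_k$ must attain the minimum value $0$.

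The first step is to control $M_2$. Under \ref{it:estsbounded}, for $n$ sufficiently large each $\hat{q}_{k,\hat{h}_n}^{v(i)}(W_i)\ge c^{-1}$ and each $\hat{P}(A_i|W_i)\le 1$ on an event with probability approaching one, so $M_2\ge c^{-1}n^{-1}\sum_{i=1}^n \Ind_{A_i=a_k}$. By the law of large numbers the right-hand side converges to $c^{-1}P(A=a_k)>0$ (positivity of $P(A=a_k)$ is implied by the positivity assumption on the treatment mechanism), so $M_2$ is bounded below by a strictly positive constant with probability approaching one.

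The second step is to show $M_1>0$ with probability approaching one. Writing $M_1 = \frac{1}{n}\sum_{i=1}^n \{P(a_k|W_i)/\hat{P}(a_k|W_i)\}\cdot\{\Ind_{A_i=a_k}/P(a_k|W_i)\} f_{k,\hat{h}_n}(O_i)$, a standard conditioning-on-training-folds argument together with the consistency of $\hat{P}(a_k|W=\cdot)$ in \ref{it:L2lim} shows that $M_1$ is asymptotically equivalent to $n^{-1}\sum_{i=1}^n \{\Ind_{A_i=a_k}/P(a_k|W_i)\}f_{k,\hat{h}_n}(O_i)$, which has conditional expectation (given $\hat{h}_n$) equal to $\Psi_{k,\hat{h}_n}(P)$. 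By Theorem~\ref{thm:bias} and positivity of the identified parameter $\Psi_k(P)>0$ (equivalently, positivity of the marginal density of $S$ at $s_1^\star$ given $A=a_k$), $\Psi_{k,\hat{h}_n}(P)$ is bounded below by a positive constant for large $n$, so $M_1>0$ with probability approaching one.

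Combining these two steps, with probability approaching one the map $\epsilon\mapsto Z_{k,\hat{h}_n}(\epsilon)^2$ has a unique finite minimizer $\log(M_1/M_2)$ at which it equals zero. Since $\hat{\epsilon}_k$ is a minimizer, $Z_{k,\hat{h}_n}(\hat{\epsilon}_k)^2=0$ and hence $Z_{k,\hat{h}_n}(\hat{\epsilon}_k)=0$, as claimed. The main obstacle is the second step: because higher-order kernels can assign negative values to individual summands $f_{k,\hat{h}_n}(O_i)$, termwise positivity of $M_1$ is unavailable, and one must instead appeal to convergence in probability of the weighted average to a strictly positive limit via \ref{it:L2lim} and the positivity of $\Psi_k(P)$.
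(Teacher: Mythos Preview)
Your argument is correct in spirit but takes a different, more explicit route than the paper. The paper proceeds by observing that, once at least one $A_i=a_k$ and $\hat{q}_{k,\hat{h}_n}^v$ is bounded away from zero (both from \ref{it:estsbounded}), $Z_{k,\hat{h}_n}$ is continuous and strictly decreasing; it then asserts that $Z_{k,\hat{h}_n}(\epsilon)\to+\infty$ as $\epsilon\to-\infty$ and $\to-\infty$ as $\epsilon\to+\infty$, and concludes by the intermediate value theorem. Your decomposition $Z_{k,\hat{h}_n}(\epsilon)=M_1-e^\epsilon M_2$ is sharper and in fact exposes an inaccuracy in the paper's proof: as $\epsilon\to-\infty$ one has $Z_{k,\hat{h}_n}(\epsilon)\to M_1$, a finite number, not $+\infty$. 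A zero therefore exists if and only if $M_1>0$, exactly the condition you isolate.

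The cost of your route is that verifying $M_1>0$ with probability approaching one requires more than the lemma's stated hypothesis \ref{it:estsbounded}: you invoke \ref{it:L2lim}, Theorem~\ref{thm:bias}, and strict positivity of $\Psi_k(P)$ for each $k$. These are all available where the lemma is actually applied (inside the proof of Theorem~\ref{thm:estsmooth}), so this is not a fatal objection, but you should state explicitly that you are strengthening the hypotheses. Your sketch for $M_1>0$ is also somewhat compressed: replacing $\hat{P}(a_k|W_i)$ by $P(a_k|W_i)$ via \ref{it:L2lim}, then applying a law of large numbers with shrinking bandwidth to conclude convergence to $\Psi_{k,\hat{h}_n}(P)$, and finally using $\Psi_{k,\hat{h}_n}(P)\to\Psi_k(P)>0$, all deserve a line each; in particular, positivity of $\Psi_2(P)$ and $\Psi_3(P)$ is not stated anywhere in the paper and should be called out as an additional (mild) assumption.
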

The final lemma presents a technical condition controlling a class used in the proof of Theorem~\ref{thm:estsmooth}.
\begin{lemma}\label{lem:envelope}
Fix $\delta>0$. Under the conditions of Theorem~\ref{thm:estsmooth}, the class $\widetilde{\mathcal{G}}_{k,\delta,\bar{h}_n}^v$ defined in the proof of Theorem~\ref{thm:estsmooth} has envelope $c\delta$ for a constant $c$ that does not depend on the sample realization $O_1,\ldots,O_n$.
\end{lemma}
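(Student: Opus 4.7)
The plan is to bound each element of $\widetilde{\mathcal{G}}_{k,\delta,\bar{h}_n}^v$ pointwise by $c\delta$ for a deterministic constant $c$. As used in the proof of Theorem~\ref{thm:estsmooth}, this class is a localization of radius $\delta$ around the limiting $\bar{h}_n^{1/2}$-scaled influence function $\bar{h}_n^{1/2} g_{k,\bar{\epsilon}_k,\bar{h}_n}$, so each element is a difference of the form $\bar{h}_n^{1/2}[g_{k,\epsilon,h,q,\pi}(o) - g_{k,\bar{\epsilon}_k,\bar{h}_n,\bar{q}_k,\pi_0}(o)]$, where the perturbation of each of the nuisance components $(\pi,q,\epsilon,h)$ from its limit is controlled by $\delta$ in the relevant norm (most importantly in $L^\infty$ after the truncation guaranteed by \ref{it:estsbounded}).

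First, I would telescope this difference into four pieces, each isolating the variation in exactly one of $(\pi,q,\epsilon,h)$. Three of these---those for $\pi$, $q$, and $\epsilon$---admit elementary Lipschitz bounds on compact positive intervals: $\pi\mapsto 1/\pi$ is Lipschitz on the truncation set provided by \ref{it:estsbounded}, and $(q,\epsilon)\mapsto q e^{\epsilon}=\exp\{\log q+\epsilon\}$ is Lipschitz on the compact positive set provided jointly by \ref{it:estsbounded} and \ref{it:barepsilon}. The offending factors of $\bar{h}_n^{1/2}f_{k,\bar{h}_n}$ that multiply these perturbations, while unbounded in $\bar{h}_n$ on their own, are paired against deviations on scale $\bar{h}_n^{1/2}\delta$ (the localization being measured in a $\bar{h}_n^{1/2}$-appropriate norm), so the resulting pointwise bounds are $O(\delta)$ with deterministic proportionality constants.

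The main obstacle is the bandwidth variation. By \ref{it:band} I may write $h=\bar{h}_n(1+\eta)$ with $|\eta|=O(\delta\bar{h}_n^{1/2})$, so that the scaled kernel difference becomes
\begin{align*}
\bar{h}_n^{1/2}\bigl[K_h(s-s_1^\star)-K_{\bar{h}_n}(s-s_1^\star)\bigr] = \bar{h}_n^{-1/2}\Bigl[(1+\eta)^{-1}K\Bigl(\tfrac{s-s_1^\star}{\bar{h}_n(1+\eta)}\Bigr)-K\Bigl(\tfrac{s-s_1^\star}{\bar{h}_n}\Bigr)\Bigr].
\end{align*}
The argument shift inside $K$ equals $\tfrac{s-s_1^\star}{\bar{h}_n}[(1+\eta)^{-1}-1]$, which is $O(\eta)$ times the normalized distance from $s_1^\star$; combining the uniform continuity and boundedness of $K$ from \ref{it:kern} with the tail decay of $K$ to absorb contributions at large $|(s-s_1^\star)/\bar{h}_n|$ yields a bracketed expression of size $O(\bar{h}_n^{1/2}|\eta|)$, so that after the prefactor $\bar{h}_n^{-1/2}$ the whole quantity is $O(|\eta|/\bar{h}_n^{1/2})=O(\delta)$. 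The parallel bandwidth-induced change of $q_{k,h}$ versus $q_{k,\bar{h}_n}$ is handled analogously, using \ref{it:L2lim} together with the boundedness of both functions by \ref{it:estsbounded}.

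Summing the four $O(\delta)$ pieces produces the envelope bound $c\delta$, where $c$ depends only on the deterministic boundedness and continuity constants in \ref{it:estsbounded}, \ref{it:barepsilon}, and \ref{it:kern} and hence does not depend on the sample realization $O_1,\ldots,O_n$, as claimed.
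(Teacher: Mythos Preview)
Your proposal rests on a misreading of the class $\widetilde{\mathcal{G}}_{k,\delta,\bar{h}_n}^v$. In the proof of Theorem~\ref{thm:estsmooth}, the class $\mathcal{G}_{k,\delta}^v$ is a subset of $\mathcal{G}_k(\hat{q}_{k,\hat{h}_n}^v,\,w\mapsto \hat{P}^v(a_k|w))$, which is indexed only by the pair $(\epsilon,h)$; the nuisance functions $\check{q}_k=\hat{q}_{k,\hat{h}_n}^v$ and $\check{\pi}_k=\hat{P}^v(a_k|\cdot)$ are \emph{held fixed} (one conditions on the training sample). Moreover, the anchor $g_{k,\bar{\epsilon},\bar{h}_n}^v$ that is subtracted off uses these \emph{same} fixed estimates, not their limits $\bar{q}_k$ and $P(a_k|\cdot)$. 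Consequently, your four-way telescoping over $(\pi,q,\epsilon,h)$ is unwarranted: the $\pi$- and $q$-pieces are identically zero, and your assertion that ``the perturbation of each of the nuisance components $(\pi,q,\epsilon,h)$ from its limit is controlled by $\delta$'' has no basis in the definition of the class---there is no $\delta$-localization of $\pi$ or $q$ there at all. The same applies to your ``parallel bandwidth-induced change of $q_{k,h}$ versus $q_{k,\bar{h}_n}$'': $\check{q}_k$ does not vary with $h$ inside the class.

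Once this is corrected, what remains is exactly the paper's two-term decomposition: one piece from $|\epsilon-\bar{\epsilon}|\le\delta$, handled by the Lipschitz property of $(q,\epsilon)\mapsto qe^\epsilon$ on the truncation set from \ref{it:estsbounded} (your argument here is fine and matches the paper's Taylor-expansion step), and one piece from $|1-h/\bar{h}_n|\le\delta\bar{h}_n^{1/2}$. For the latter the paper simply uses boundedness of $K$ from \ref{it:kern} to write $\bar{h}_n^{1/2}|f_{k,h}(o)-f_{k,\bar{h}_n}(o)|\lesssim \bar{h}_n^{1/2}|h^{-1}-\bar{h}_n^{-1}| = \bar{h}_n^{-1/2}|1-h/\bar{h}_n|\cdot\bar{h}_n/h\le c\delta$. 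Your version invokes an additional ``tail decay of $K$'' that is not among the stated conditions and is not needed for the pointwise envelope bound.
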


\begin{proof}[Proof of Theorem~\ref{thm:estsmooth}]
To simplify notation, we assume that $n$ is divisible by 10 so that validation fold $v$ is of size $n/10$ -- note that the results of this proof go through without this assumption. We let $P_n^v$ denote the empirical distribution of the observations in validation set $v$. For a function $g$ and a distribution $Q$, we let $Qg = \E_Q[g(O)]$.

Fix $k\in\{1,2,3\}$. By the same expansion used to establish \eqref{eq:rem},
\begin{align}
\Psi_{k,h}(P') - \Psi_{k,h}(P)&= -P D_{k,h}^{P'} + \Rem_{k,h}(P,P'), \label{eq:ident}
\end{align}
where, with $q'_{k,h}(w)\equiv \E_{P'}[f_{k,h}(O)|a_k,w]$,
\begin{align*}
\Rem_{k,h}(P,P')&= \E_P\left[\frac{\left(P'(a_k|W)-P(a_k|W)\right)\left(q'_{k,h}(W)-q_{k,h}(W)\right)}{P'(a_k|W)}\right].
\end{align*}
By Cauchy-Schwarz and \ref{it:estsbounded}, this remainder upper bounds as
\begin{align*}
\left|\Rem_{k,h}(P,P')\right|\le \delta^{-1}&\norm{P'(a_k|W=\cdot)-P(a_k|W=\cdot)}\norm{q'_{k,h}-q_{k,h}}.
\end{align*}
For a given training fold $v$, the above is $o_P(n^{-1/2}\hat{h}_n^{-1/2})$ by \ref{it:rem} when evaluated at $h=\hat{h}_n$ and $P'=\hat{P}_{k,\hat{h}_n}^{v,*}$. Multiplying by $n^{-1}$ and summing the identity \eqref{eq:ident} over $v=1,\ldots,10$ and $i\in v$ at $h=\hat{h}_n$ and $P'=\hat{P}_{k,\hat{h}_n}^{v,*}$, shows that
\begin{align*}
\hat{\psi}_{k,h} - \Psi_{k,h}(P)&= -\frac{1}{n}\sum_{v=1}^{10} \sum_{i: v(i)=1} \E\left[D_{k,\hat{h}_n}^{\hat{P}_{k,\hat{h}_n}^{v,*}}(O)\right] + o_P(n^{-1/2}\hat{h}_n^{-1/2}).
\end{align*}
Using Lemma~\ref{lem:zest}, with probability approaching $1$ the above shows that
\begin{align}
\hat{\psi}_{k,h} - \Psi_{k,h}(P)&= \frac{1}{10}\sum_{v=1}^{10} (P_n^v-P)D_{k,\hat{h}_n}^{\hat{P}_{k,\hat{h}_n}^{v,*}} + o_P(n^{-1/2}\hat{h}_n^{-1/2}). \label{eq:preVCapplication}
\end{align}
Consider the following $\mathbb{R}^6$-valued data structure, which can be derived from the observed data structure: $\underbar{O}\equiv (\check{q}_k(W),\check{\pi}_k(W),A,S,S^c,Y)$. For $i=1,\ldots,n$, we will let $\underbar{O}_i$ denote their corresponding $\mathbb{R}^6$-valued data structure. Noting that $f_{k,h}(o)$ can be evaluated from $\underbar{o}=(\check{q}_k(w),\check{\pi}_k(w),a,s,s^c,y)$, we define $\tilde{f}_{k,h}$ so that $\tilde{f}_{k,h}(\underbar{o})=f_{k,h}(o)$. For fixed functions $w\mapsto \check{q}_k(w)$ and $w\mapsto \check{\pi}_k$, define the class
\begin{align}
&\mathcal{G}_k(\check{q}_k,\check{\pi}_k)\equiv \left\{\underbar{o}\mapsto g_{k,\epsilon,h}^v(\underbar{o}) : \epsilon\in\mathbb{R},h\in\mathbb{R}\right\}, \label{eq:Gdef}
\end{align}
where
\begin{align}
g_{k,\epsilon,h}^v(\underbar{o})\equiv \frac{\Ind_{a=a_k}}{\check{\pi}_k(w)} \left[\tilde{f}_{k,h}(\underbar{o}) - \exp\left\{\log \check{q}_k(w) + \epsilon\right\}\right] + \exp\left[\log \check{q}_k(w) + \epsilon\right]. \label{eq:gepsh}
\end{align}
Lemma~\ref{lem:vc} demonstrates that $\mathcal{G}_k(\check{q}_k,\check{\pi}_k)$ is VC-subgraph \citep{vanderVaartWellner1996}, and that the VC dimension does not depend on $\check{q}_k,\check{\pi}_k$. For each $v$ and for the class $\mathcal{G}_k(\hat{q}_{k,h}^v,w\mapsto \hat{P}^v(a_k|w))$, let $g_{k,\epsilon,h}^v$ denote the element of this class indexed by $\epsilon$ and $h$. Note that \eqref{eq:preVCapplication} rewrites as
\begin{align}
\hat{\psi}_{k,h} - \Psi_{k,h}(P)&= \frac{1}{10}\sum_{v=1}^{10} (P_n^v - P) g_{k,\hat{\epsilon}_k,\hat{h}_n}^v + o_P(n^{-1/2}\hat{h}_n^{-1/2}). \label{eq:expansion}
\end{align}
Consider the following subclass of $\mathcal{G}_k(\hat{q}_{k,h}^v,w\mapsto \hat{P}^v(a_k|w))$:
\begin{align*}
\mathcal{G}_{k,\delta}^v\equiv \left\{\underbar{o}\mapsto g_{k,\epsilon,h}^v(\underbar{o}) : |\epsilon-\bar{\epsilon}_k|\le \delta,|1-\hat{h}_n/\bar{h}_n|\le \delta\bar{h}_n^{1/2}\right\}.
\end{align*}
The VC dimension of this subclass is no greater than that of $\mathcal{G}_k(\hat{q}_{k,h}^v,w\mapsto \hat{P}^v(a_k|w))$ and so is bounded by a constant that does not depend on the sample. Combining this with the fact that the VC dimension of this subclass is bounded by a universal constant $c_1$ shows that the VC dimension of the class $\widetilde{\mathcal{G}}_{k,\delta,\bar{h}_n}^v\equiv \left\{\bar{h}_n^{1/2}[g_\delta^v - g_{k,\bar{\epsilon}_k,\bar{h}_n}^v] : g_\delta^v\in \mathcal{G}_{k,\delta}^v\right\}$ is also bounded by a constant that does not rely on the sample. Introducing $\widetilde{\mathcal{G}}_{k,\delta,\bar{h}_n}^v$ is useful because, for any fixed $\delta>0$, \eqref{eq:expansion} combined with \ref{it:band} and \ref{it:barepsilon} yield that, with probability approaching one,
\begin{align}
&\left|\hat{\psi}_{k,h} - \Psi_{k,h}(P)-\frac{1}{10}\sum_{v=1}^{10} (P_n^v - P) g_{k,\bar{\epsilon}_k,\bar{h}_n}^v -o_P(n^{-1/2}\hat{h}_n^{-1/2})\right| \nonumber \\
&\le \left|\frac{1}{10}\sum_{v=1}^{10} (P_n^v - P) (g_{k,\hat{\epsilon}_k,\hat{h}_n}^v-g_{k,\bar{\epsilon}_k,\bar{h}_n}^v)\right|\le \frac{1}{10\bar{h}_n^{1/2}}\sum_{v=1}^{10} \sup_{\tilde{g}_\delta^v\in \widetilde{\mathcal{G}}_{k,\delta,\bar{h}_n}^v}\left|(P_n^v - P) \tilde{g}_\delta^v\right|. \label{eq:empprocbd}
\end{align}
Lemma~\ref{lem:envelope} shows that the class $\widetilde{\mathcal{G}}_{k,\delta,\bar{h}_n}^v$ has an envelope upper bounded by $c\delta$ for a constant $c$ that does not depend on the sample, i.e. $\sup_{g\in \widetilde{\mathcal{G}}_{k,\delta,\bar{h}_n}^v}|g(\underbar{o})|\le c\delta$ for all $\underbar{o}$. Now, for each $v$ we know that $\widetilde{\mathcal{G}}_{k,\delta,\bar{h}_n}^v$ is a VC-class and there exists an upper bound on the VC-dimension that does not depend on $O_1,\ldots,O_n$, and so Theorem~2.14.1 in \cite{vanderVaartWellner1996} can be used to show that, conditionally on the training set $v$,
\begin{align*}
n^{1/2}\E_{P_0^v}\left[\sup_{\tilde{g}_\delta^v\in \widetilde{\mathcal{G}}_{k,\delta,\bar{h}_n}^v}\left|(P_n^v - P) \tilde{g}_\delta^v\right|^2\right]^{1/2}&\lesssim \delta,
\end{align*}
where $P_0^v$ is the distribution of the data in validation fold $v$ and here and throughout we use ``$\lesssim$'' to denote ``less than or equal to up to a positive multiplicative constant that does not depend on the observations $O_1,\ldots,O_n$''. The above holds for all $\delta$. Therefore, for any deterministic sequence $\delta_n\downarrow 0$, the above shows that the right-hand side of \eqref{eq:empprocbd} is $o_P(n^{-1/2}\bar{h}_n^{-1/2})$ for this choice of $\delta=\delta_n$. Moreover, by \ref{it:band}, $o_P(n^{-1/2}\bar{h}_n^{-1/2})=o_P(n^{-1/2}\hat{h}_n^{-1/2})$.

We will choose the sequence $\delta_n\downarrow 0$ in such a way that the display in \eqref{eq:empprocbd} at $\delta=\delta_n$ holds with probability approaching one. We show that such a sequence exists as follows. Let $\mathcal{E}_{\delta,n}$ denote the event that \eqref{eq:empprocbd} does not hold. Because, for any fixed $\delta$, the complement of $\mathcal{E}_{\delta,n}$ holds with probability approaching one, we know that there exists a sequence $\varepsilon_{\delta,n}\downarrow 0$ so that $P\{\mathcal{E}_{\delta,n}\}\le \varepsilon_{\delta,n}$ for all $n$. As this is true for all $\delta$, there exists a sequence $\delta_n\downarrow 0$ so that $P\{\mathcal{E}_{\delta_n,n}\}\rightarrow 0$ as $n\rightarrow\infty$. One way to construct such a sequence $\{\delta_n\}$ is as follows: first, let $n_1=1$; then recursively, for $j=1,2,\ldots$, choose $\delta_n=1/2^j$ for all $n=n_j+1,\ldots,n_{j+1}$, where $n_{j+1}$ is the smallest natural number that is greater than $n_j$ such that $\max\{\varepsilon_{1/2^j,n_{j+1}},\varepsilon_{1/2^{j+1},n_{j+1}}\} \le 1/2^j$ --- notably, the fact that $\max\{\varepsilon_{1/2^j,n},\varepsilon_{1/2^{j+1},n}\}\rightarrow 0$ as $n\rightarrow\infty$ implies that $n_{j+1}$ is finite. The event $\mathcal{E}_{\delta_n,n}$ occurs with probability approaching zero along this deterministic sequence, and so the display in \eqref{eq:empprocbd} holds with probability approaching one when $\delta=\delta_n$. Plugging this choice of $\{\delta_n\}$ sequence into the discussion from the previous paragraph, which controlled the right-hand side of \eqref{eq:empprocbd} for a given sequence $\delta_n\downarrow 0$, we see that
\begin{align*}
\hat{\psi}_{k,h} - \Psi_{k,h}(P)= \frac{1}{10}\sum_{v=1}^{10} (P_n^v - P) g_{k,\bar{\epsilon}_k,\bar{h}_n}^v +o_P(n^{-1/2}\hat{h}_n^{-1/2}).
\end{align*}
For each $v$, the function $g_{k,\bar{\epsilon}_k,\bar{h}_n}^v$ does not depend on validation sample $v$. Therefore, Chebyshev's inequality applied over each $(P_n^v - P) g_{k,\bar{\epsilon}_k,\bar{h}_n}^v$ can be used to show that, if there exists a function $g_{k,\bar{\epsilon}_k,\bar{h}_n}$ such that $\bar{h}_n P_0^v[(g_{k,\bar{\epsilon}_k,\bar{h}_n}^v-g_{k,\bar{\epsilon}_k,\bar{h}_n})^2]=o_P(1)$, then $(P_n^v - P) g_{k,\bar{\epsilon}_k,\bar{h}_n}^v$ is equal to $(P_n^v - P) g_{k,\bar{\epsilon}_k,\bar{h}_n} + o_P(n^{-1/2}\bar{h}_n^{-1/2})$, where under \ref{it:band} the $o_P$ term can also be expressed as $o_P(n^{-1/2}\hat{h}_n^{-1/2})$. Under \ref{it:L2lim}, the convergence to a fixed function $g_{k,\bar{\epsilon}_k,\bar{h}_n}$ does indeed hold, with $g_{k,\bar{\epsilon}_k,\bar{h}_n}$ defined in \eqref{eq:glimdef}.
Hence,
\begin{align*}
\hat{\psi}_{k,h} - \Psi_{k,h}(P)&= \frac{1}{10}\sum_{v=1}^{10} (P_n^v - P) g_{k,\bar{\epsilon}_k,\bar{h}_n} + o_P(n^{-1/2}\hat{h}_n^{-1/2}) \\
&= (P_n-P) g_{k,\bar{\epsilon}_k,\bar{h}_n} + o_P(n^{-1/2}\hat{h}_n^{-1/2}).
\end{align*}
The above asymptotically linear expansion holds for each $k$.

Let $g_{\bar{\epsilon},\bar{h}_n} : o\mapsto (g_{k,\bar{\epsilon}_k,\bar{h}_n}(o) : k=1,2,3)$. 
Using \ref{it:L2lim} and \ref{it:estsbounded}, each component in the $\mathbb{R}^3$-valued vector $g_{\bar{\epsilon},\bar{h}_n}(O)$ is almost surely bounded by a universal constant times $\bar{h}_n^{-1}$ for $O\sim P$. Using that $\bar{h}_n$ and $\hat{h}_n$ are essentially equivalent under \ref{it:band}, our result will follow if we establish asymptotic normality of the sequence $n^{1/2}(P_n-P)\bar{h}_n^{1/2}\Sigma_n^{-1/2} g_{\bar{\epsilon},\bar{h}_n}$. We do this via a Cram\'{e}r-Wold device and a central limit theorem for triangular arrays. Let $b$ be an $\mathbb{R}^3$-valued column vector. Observe that \ref{it:covmat} ensures that, with probability approaching one as $n\rightarrow\infty$, $\bar{h}_n^{1/2} b^T \Sigma_n^{-1/2} g_{\bar{\epsilon},\bar{h}_n}(O)$ is almost surely bounded by a constant $c_b$ times $\bar{h}_n^{-1/2}$. Furthermore, $\Var_P[ \bar{h}_n^{1/2} \Sigma_n^{-1/2} b^T g_{\bar{\epsilon},\bar{h}_n}(O)]$ is equal to $\norm{b}^2$. Using that $n\bar{h}_n\rightarrow\infty$ (\ref{it:band}), one can show that the Lindeberg condition for triangular arrays is satisfied and the sequence $\langle b, n^{1/2}(P_n-P)\bar{h}_n^{1/2}\Sigma_n^{-1/2} g_{\bar{\epsilon},\bar{h}_n} \rangle$ converges in distribution to a normal random variable with variance $\norm{b}^2$. As $b$ was arbitrary, $n^{1/2}(P_n-P)\bar{h}_n^{1/2}\Sigma_n^{-1/2} g_{\bar{\epsilon},\bar{h}_n}$ converges to a $N(\mb{0},\Id)$ random variable.
\end{proof}

\begin{proof}[Proof of Lemma~\ref{lem:vc}]
The subgraph of a class $\mathcal{G}'\equiv\{\underbar{o}\mapsto g(\underbar{o}) : g\}$ is defined as the set $\{(\underbar{o},z) : z<g(\underbar{o})\}$. For a pair $(\underbar{o},z)$, membership to this set can be computed by first computing $g(\underbar{o})$, and subsequently returning an indicator that $z<g(\underbar{o})$. Now, for $\mathcal{G}_k$, these functions $g$ are indexed only by parameters $(\epsilon,h)\in\mathbb{R}^2$. Furthermore, for properly defined $t$ that only depends on the choice of kernel $K$, \ref{it:kern} ensures that all of these these functions can be computed using no more than a total of $t$ arithmetic operations ($+$, $-$, $\div$, $\times$), indicator functions ($x\mapsto \Ind\{x>c\}$ for a constant $c$), and exponential functions ($x\mapsto e^x$). By Theorem~8.14 in \cite{Anthony&Bartlett1999}, the VC-dimension of this class is no more than $m(t+2)(t+2+19\log_2[9(t+2)])$, hence is finite and does not depend on $\check{q}_k,\check{\pi}_k$.
\end{proof}

\begin{proof}[Proof of Lemma~\ref{lem:zest}]
Fix $k\in\{1,2,3\}$. By the strong positivity assumption, with probability approaching one there exists at least one $i$ such that $A_i=a_k$. Without loss of generality, suppose that this holds. Combining this with the bound on $\hat{q}_{k,\hat{h}_n}^v$ in \ref{it:estsbounded}, $Z_{k,\hat{h}_n}$ is a monotonically strictly decreasing. Furthermore, $Z_{k,\hat{h}_n}(\epsilon)$ respectively diverges to $+\infty$ and $-\infty$ as $\epsilon$ tends to $-\infty$ and $+\infty$. Furthermore, $Z_{k,\hat{h}_n}$ is continuous, and therefore there exists a (unique) solution in $\epsilon$ to $Z_{k,\hat{h}_n}(\epsilon)=0$, and this solution must coincide with $\hat{\epsilon}_k$, defined as the minimizer of $Z_{k,\hat{h}_n}(\epsilon)^2$ in $\epsilon$.
\end{proof}

\begin{proof}[Proof of Lemma~\ref{lem:envelope}]
Note that
\begin{align*}
&\bar{h}_n^{1/2} |g_{k,\epsilon,h}^v(\underbar{o})-g_{k,\bar{\epsilon}_k,\bar{h}_n}^v(\underbar{o})| \\
&\le \bar{h}_n^{1/2} \frac{\Ind_{a=a_k}}{\hat{P}_k(a_k|w)} \left|f_{k,h}(o)-f_{k,\bar{h}_n}(o)\right| + \bar{h}_n^{1/2}\left|\frac{\Ind_{a=a_k}}{\hat{P}_k(a_k|w)} -1\right|\left|e^{\log \hat{q}_{k,h}^v(w) + \epsilon}-e^{\log \hat{q}_{k,h}^v(w) + \bar{\epsilon}_k}\right| \\
&\le \bar{h}_n^{1/2}\hat{P}_k(a_k|w)^{-1}\left[\left|f_{k,h}(o)-f_{k,\bar{h}_n}(o)\right| + \left|e^{\log \hat{q}_{k,h}^v(w) + \epsilon}-e^{\log \hat{q}_{k,h}^v(w) + \bar{\epsilon}_k}\right|\right]
\end{align*}
By \eqref{it:kern}, $K$ is bounded and so $\left|f_{k,h}(o)-f_{k,\bar{h}_n}(o)\right|\lesssim |h^{-1}-\bar{h}_n^{-1}|$. Using a Taylor expansion and the bound on $\hat{q}_{k,h}^v$ from \eqref{it:estsbounded}, we also have that, with probability approaching one, $\left|e^{\log \hat{q}_{k,h}^v(w) + \epsilon}-e^{\log \hat{q}_{k,h}^v(w) + \bar{\epsilon}_k}\right|\lesssim |\epsilon-\bar{\epsilon}_k|$. Furthermore, by \eqref{it:estsbounded}, $\hat{P}_k(a_k|w)^{-1}\lesssim 1$. Hence,
\begin{align*}
|g_{k,\epsilon,h}^v(\underbar{o})-g_{k,\bar{\epsilon}_k,\bar{h}_n}^v(\underbar{o})|&\lesssim \bar{h}_n^{1/2}|h^{-1}-\bar{h}_n^{-1}| - \bar{h}_n^{1/2}|\epsilon-\bar{\epsilon}_k| \\
&= \bar{h}_n^{-1/2}\left|1-\frac{h}{\bar{h}_n}\right|\left(1 + \left[\frac{\bar{h}_n}{h}-1\right]\right)- \bar{h}_n^{1/2}|\epsilon-\bar{\epsilon}_k|
\end{align*}
If $g_{k,\epsilon,h}^v$ falls in $\mathcal{G}_{k,\delta}^v$, then the right-hand side upper bounds by $c \delta$ for a constant $c$ that does not depend on $O_1,\ldots,O_n$. Therefore, the envelope of the class $\widetilde{\mathcal{G}}_{k,\delta,\bar{h}_n}^v$ is of the order $\delta$.
\end{proof}

\begin{proof}[Proof of Theorem~\ref{thm:bias}]
Fix $k\in\{1,2,3\}$ and suppose that $r\ge t+1$. A Taylor series expansion shows that
\begin{align*}
\Psi_{k,h}(P) - \Psi_{k}(P)=\,& \int \left[\Psi_k(P;s) - \Psi_{k}(P)\right] K_h(s-s_1^\star) ds \\
=\,& \sum_{j=1}^t \left[\frac{1}{j!}\left.\frac{\partial^j}{\partial s^j} \Psi_{k}(P;s)\right|_{s=s_1^\star} \int (s-s_1^\star)^j K_h(s-s_1^\star) ds \right] \\
&+ \int \frac{(s-s_1^\star)^{t+1}}{(t+1)!} \left.\frac{\partial^{t+1}}{\partial s^{t+1}} \Psi_k(P;s) \right|_{s=\tilde{s}_s} K_h(s-s_1^\star) ds,
\end{align*}
where each $\tilde{s}_s$ is an intermediate value falling between $s$ and $s_1^\star$. So, the right-hand side is equal to the final term. By the uniform bound on the $(t+1)^{\textnormal{th}}$ derivative of $s\mapsto \Psi_k(P;s)$, the magnitude of the final term is upper bounded by a $t$-dependent constant times
\begin{align*}
\int \left|(s-s_1^\star)^{t+1} K_h(s-s_1^\star)\right| ds&= h^{t+1} \int \left|\left(\frac{s-s_1^\star}{h}\right)^{t+1} K\left(\frac{s-s_1^\star}{h}\right)\right| \frac{ds}{h} = h^{t+1} \int \left|u^{t+1} K(u)\right| du.
\end{align*}
The right-hand side is $O(h^{t+1})$ by assumption. The result follows because $k\in\{1,2,3\}$ was arbitrary.
\end{proof}

\section{Efficient Influence Functions}\label{app:EIF}
We now review the definition of efficient influence functions, which is used in Theorem~\ref{thm:eif}. Define the following fluctuation submodel through $P$:
\begin{align*}
dP_{\epsilon}(o)&\equiv \left[1 + \epsilon h(o)\right]dP(o), \mbox{ where }\E_P[h(O)] = 0\textrm{ and }\sup_o|h(o)|<\infty.
\end{align*}
The function $h$ is a score, and the closure of the linear span of all scores yields the tangent space. It is the resulting tangent space that is important, as pathwise differentiability is equivalent for any set of functions $h$ that yield the same tangent space. Hence, the restriction that $\sup_{o\in\mathcal{O}}|h(o)|<\infty$, while convenient, will have no impact on the resulting differentiability properties.

The parameter $\Psi$ is called pathwise differentiable at $P$ if there exists a $D^P\in L_0^2(P)$, i.e. a $D^P$ such that $D^P(O)$ has mean zero and finite variance under $O\sim P$, such that
\begin{align*}
\Psi(P_\epsilon)-\Psi(P)&= \epsilon \int D^P(o) h(o) dP(o) + o(\epsilon).
\end{align*}
We call $D^P$ a gradient of $\Psi$ at $P$. The efficient influence function is the gradient $D^P$ for which $D^P(O)$, $O\sim P$, has minimal variance. In a nonparametric model, $D^P(O)$ is almost surely unique.

\newpage
\section{Two-Phase Sampling Algorithms}\label{app:twophase}

\begin{algorithm}
\caption{TMLE for Estimating $\Psi_1(P)$, $\Psi_2(P)$, and $\Psi_3(P)$ under Two-Phase Sampling}\label{alg:2phasetmle}
\begin{algorithmic}
\Statex Takes as input $n$ observations $\textnormal{Obs}\equiv\{O_i : i=1,\ldots,n\}$, and also the weights $\Delta_i\pi_i^{-1}$.
\Function{TMLE}{$\textnormal{Obs}$}
	\State \parbox[t]{\dimexpr\linewidth-\algorithmicindent}{\textbf{Stabilization:} For $a=0,1$, let $c(a)\equiv \frac{\sum_{i : A_i=a}  \Delta_i/\pi_i}{\#\{i : A_i=a\}}$. For each $i$, let $\bar{\pi}_i\equiv c(A_i)\pi_i$.\\
	\Comment{For $a=0,1$, the choice of $c(a)$ ensures that $\frac{1}{\#\{i : A_i=a\}}\sum_{i : A_i=a}\Delta_i\bar{\pi}_i^{-1} = 1$.}}
	\State \parbox[t]{\dimexpr\linewidth-\algorithmicindent}{\textbf{Initial Estimates:} Define an initial estimator $\hat{P}$ of $P$:
\begin{itemize}
	\item The marginal distribution of $W$ under $\hat{P}$ should be the weighted empirical distribution, for which each observations puts a mass of $\Delta_i\bar{\pi}_i^{-1}$ at each observation $\tilde{O}_i$.
	\item Only the estimates of the components of $P(O|W)$ needed to evaluate $\Psi_k$ and $D_k$, $k=1,2,3$, are needed: $\hat{P}(A=1|W=\cdot)$, $\hat{P}(Y=1|S=s_1^\star,A=1,W=\cdot)$, $\hat{P}(S=s_1^\star|A=1,W=\cdot)$, and $\hat{P}(Y=0,S^c=s_1^\star|A=0,W=\cdot)$.
\end{itemize}
	\Comment{If estimated using loss-based learning, include observation weights $\Delta_i\bar{\pi}_i^{-1}$.}}
	\For{$k=1,2,3$}\Comment{Recall the definitions of $f_k$ and $a_k$ from Theorem~\ref{thm:eif}.}
		\State \parbox[t]{\dimexpr\linewidth-\algorithmicindent-\algorithmicindent}{\textbf{Fluctuation for Targeting Step:} Using observations $i=1,\ldots,n$, fit the intercept $\hat{\epsilon}_k$ using an intercept-only logistic regression with outcome $f_k(O_i)$, offset $\logit \hat{P}\left\{f_k(O)=1|a,w\right\}$, and weights $\frac{\Ind_{A_i=a_k}}{\hat{P}(A_i|W_i)}\Delta_i\bar{\pi}_i^{-1}$.}
		\State \parbox[t]{\dimexpr\linewidth-\algorithmicindent-\algorithmicindent}{\textbf{Targeting Step:} We now define a fluctuation of $\hat{P}_k^*$ of $\hat{P}$, targeted towards estimation of $\psi_k$. Define 
		\begin{align*}
		\hat{P}_k^*\left\{f_k(O)=1|a_k,w\right\}&\equiv \expit\left\{\logit \hat{P}\left\{f_k(O)=1|a_k,w\right\} + \hat{\epsilon}_k\right\}.
		\end{align*}
		The marginal distribution of $W\sim \hat{P}$ does not need to be fluctuated.
			}
		\State \parbox[t]{\dimexpr\linewidth-\algorithmicindent-\algorithmicindent}{\textbf{Plug-In Estimator:} The estimator of $\psi_k$ is $\hat{\psi_k}\equiv \Psi_k(\hat{P}_k^*)$, and the estimate of the corresponding component of the (full data) influence function is $\hat{D}_k\equiv D_k^{\hat{P}_k^*}$.\\
		\Comment{Here, by ``full data'' influence function, we mean that $\hat{D}_k$ is the influence function that we would have had if the phase two covariates had been measured on all subjects. When subsequently developing confidence intervals, $\hat{D}_k(O_i)$ will be weighted by $\Delta_i\bar{\pi}_i^{-1}$.}
		}
	\EndFor
	\State \parbox[t]{\dimexpr\linewidth-\algorithmicindent}{\textbf{return} Estimate $\boldsymbol{\hat{\psi}}\equiv (\hat{\psi}_k : k=1,2,3)$ and estimated full data influence function $\mb{\hat{D}}\equiv (\hat{D}_k : k=1,2,3)$.}
\EndFunction
\end{algorithmic}
\end{algorithm}

\section{Additional Simulation Results} \label{app:sim}
\begin{table}[ht]
\centering
\begin{tabular}{lrrrrrrrrrrr}
  \hline
$s_1^*$ & 0 & 0.1 & 0.2 & 0.3 & 0.4 & 0.5 & 0.6 & 0.7 & 0.8 & 0.9  \\ 
  \hline
  Bias, Truth & 0.03 & 0.04 & 0.04 & 0.03 & 0.03 & 0.03 & 0.03 & 0.05 & 0.06 & 0.08 \\ 
  Bias, Smoothed & 0.04 & 0.05 & 0.02 & -0.01 & -0.01 & 0.02 & 0.05 & 0.08 & 0.07 & 0.06 \\ 
  Standard Error & 0.28 & 0.27 & 0.26 & 0.26 & 0.27 & 0.29 & 0.31 & 0.34 & 0.39 & 0.49 \\ 
  Coverage, Truth & 0.97 & 0.97 & 0.98 & 0.98 & 0.97 & 0.97 & 0.97 & 0.97 & 0.97 & 0.96 \\ 
  Coverage, Smoothed & 0.97 & 0.97 & 0.97 & 0.96 & 0.96 & 0.97 & 0.97 & 0.98 & 0.97 & 0.96 \\ 
   \hline
\end{tabular}
\caption{Bias, standard error, and coverage probability of the log relative risk point estimator and confidence intervals for different $s_1^*$ values with bandwidth chosen to be $h=0.1$.}
\label{tab:h0.1}
\end{table}

\begin{table}[ht]
\centering
\begin{tabular}{lrrrrrrrrrrr}
  \hline
$s_1^*$ & 0 & 0.1 & 0.2 & 0.3 & 0.4 & 0.5 & 0.6 & 0.7 & 0.8 & 0.9  \\ 
  \hline
  Bias, Truth & -0.03 & -0.02 & 0.00 & 0.02 & 0.04 & 0.06 & 0.08 & 0.10 & 0.13 & 0.15 \\ 
  Bias, Smoothed & 0.03 & 0.02 & 0.02 & 0.02 & 0.01 & 0.01 & 0.00 & -0.00 & -0.00 & 0.00 \\ 
  Standard Error & 0.14 & 0.13 & 0.13 & 0.13 & 0.13 & 0.14 & 0.14 & 0.15 & 0.17 & 0.19 \\ 
  Coverage, Truth & 0.93 & 0.94 & 0.94 & 0.95 & 0.95 & 0.94 & 0.93 & 0.93 & 0.92 & 0.92 \\ 
  Coverage, Smoothed & 0.96 & 0.95 & 0.94 & 0.95 & 0.95 & 0.95 & 0.96 & 0.96 & 0.95 & 0.95 \\ 
   \hline
\end{tabular}
\caption{Bias, standard error, and coverage probability of the log relative risk point estimator and confidence intervals for different $s_1^*$ values with bandwidth chosen to be $h=0.3$.}
\label{tab:h0.3}
\end{table}

\begin{table}[ht]
\centering
\begin{tabular}{lrrrrrrrrrrr}
  \hline
$s_1^*$ & 0 & 0.1 & 0.2 & 0.3 & 0.4 & 0.5 & 0.6 & 0.7 & 0.8 & 0.9  \\ 
  \hline
  Bias, Truth & -0.07 & -0.04 & -0.01 & 0.02 & 0.05 & 0.08 & 0.12 & 0.15 & 0.18 & 0.22 \\ 
  Bias, Smoothed & 0.00 & 0.00 & 0.00 & 0.00 & 0.00 & 0.00 & 0.00 & 0.00 & -0.00 & -0.00 \\ 
  Standard Error & 0.11 & 0.11 & 0.11 & 0.11 & 0.11 & 0.11 & 0.12 & 0.13 & 0.14 & 0.15 \\ 
  Coverage, Truth & 0.90 & 0.93 & 0.95 & 0.95 & 0.93 & 0.91 & 0.84 & 0.79 & 0.73 & 0.71 \\ 
  Coverage, Smoothed & 0.95 & 0.96 & 0.95 & 0.95 & 0.95 & 0.95 & 0.95 & 0.95 & 0.95 & 0.96 \\ 
   \hline
\end{tabular}
\caption{Bias, standard error, and coverage probability of the log relative risk point estimator and confidence intervals for different $s_1^*$ values with bandwidth chosen to be $h=0.4$.}
\label{tab:h0.4}
\end{table}

\end{document}